\newcolumntype{M}{>{$}l<{$}}
\newcolumntype{N}{>{$}r<{$}}
\DeclareMathOperator{\alphabet}{alph}
\begin{document}

\frontmatter

\title{Finite Orbits of Language Operations}

\author{\'Emilie Charlier\inst{1} \and
Mike Domaratzki\inst{2} \and 
Tero Harju\inst{3} 
\and Jeffrey Shallit\inst{1}}

\institute{University of Waterloo,
Waterloo, ON  N2L 3G1 Canada 
\email{echarlier@uwaterloo.ca, shallit@cs.uwaterloo.ca}
\and
University of Manitoba,
Winnipeg, MB  R3T 2N2 Canada
\email{mdomarat@cs.umanitoba.ca}
\and
University of Turku, 
FIN-20014 Turku, Finland 
\email{harju@utu.fi} 
}

\maketitle

\def\pref{{\rm pref}}
\def\suff{{\rm suff}}
\def\fact{{\rm fact}}
\def\subw{{\rm subw}}

\begin{abstract}
We consider a set of natural operations on languages, and prove that
the orbit of any language $L$ under the monoid generated by
this set is finite and bounded, 
independently of $L$.  This generalizes previous results about
complement, Kleene closure, and positive closure.
\end{abstract}

\section{Introduction}

If $t, x, y, z$ are (possibly empty) words with $t = xyz$, we say 
\begin{itemize}
\item $x$ is a {\it prefix} of $t$;
\item $z$ is a {\it suffix} of $t$;
and
\item $y$ is a {\it factor} of $t$.
\end{itemize}
If $t = x_1 t_1 x_2 t_2 \cdots x_n t_n x_{n+1}$ for some $n\geq 1$ and
some (possibly empty) words
$t_i, x_j$, $1 \leq i \leq n$, $1 \leq j \leq n+1$, then $t_1 \cdots t_n$
is said to be a {\it subword} of $t$.  Thus a factor is a contiguous
block, while a subword can be ``scattered''.

Let $L$ be a language over the finite alphabet $\Sigma$, that is, 
$L \subseteq \Sigma^*$.  We consider the following eight
natural operations
applied to $L$:
\begin{eqnarray*}
k: && L \rightarrow L^* \\
e: && L \rightarrow L^+  \\
c: && L \rightarrow \overline{L} = \Sigma^* - L  \\
p: && L \rightarrow \pref(L) \\
s: && L \rightarrow \suff(L) \\
f: && L \rightarrow \fact(L) \\
w: && L \rightarrow \subw(L) \\
r: && L \rightarrow L^R.
\end{eqnarray*}

Here 
\begin{eqnarray*}
\pref(L) &=& 
\lbrace x \in \Sigma^* \ : \ x \text{ is a prefix of
some } y \in L \rbrace; \\
\suff(L) &=& \lbrace x \in \Sigma^* \ : \ x \text{ is a suffix of
some } y \in L \rbrace;  \\
\fact(L) &=& \lbrace x \in \Sigma^* \ : \ x \text{ is a factor of
some } y \in L \rbrace; \\
\subw(L) &=& \lbrace x \in \Sigma^* \ : \ x \text{ is a subword of
some } y \in L \rbrace; \\
L^R &=& \lbrace x \in \Sigma^* \ : \ x^R \in L \rbrace;
\end{eqnarray*}
where $x^R$ denotes the reverse of the word $x$.  

We compose these operations as follows:  if 
$x = a_1 a_2 \cdots a_n \in \lbrace k,e,c,p,s,f,w,r \rbrace^*$, then
$$x(L) = a_1(a_2(a_3(\cdots ( a_n (L)) \cdots ))).$$
Thus, for example, $ck(L) = \overline{L^*}$.  We also write $\epsilon(L) = L$.  

Given two elements $x, y \in \lbrace k,e,c,p,s,f,w,r \rbrace^*$, we 
write $x \equiv y$ if $x(L) = y(L)$ for all languages $L$, and we write
$x \subseteq y$ if $x(L) \subseteq y(L)$ for all languages $L$.

Given a subset $S \subseteq \lbrace k,e,c,p,s,f,w,r \rbrace$, we can
consider the orbit of languages
$$ {\cal O}_S (L) = \lbrace x(L) \ : \ x \in S^* \rbrace $$
under the monoid of operations generated by $S$.  We are interested
in the following questions:  when is this monoid finite?  Is the
cardinality of ${\cal O}_S (L)$ bounded, independently of $L$?

These questions were previously investigated for the sets
$S = \lbrace k,c \rbrace$ and $S = \lbrace e,c \rbrace$ \cite{Peleg,BGS}, where
the results can be viewed as the
formal language analogues of Kuratowski's celebrated ``14-theorem'' for
topological spaces
\cite{Kuratowski:1922,GJ}.
In this paper we consider the questions for other subsets of 
$\lbrace k,e,c,p,s,f,w,r \rbrace$.  Our main result is
Theorem~\ref{main} below, which shows finiteness for any subset of these
eight operations.

\section{Operations with infinite orbit}

We point out that the orbit of $L$ under an arbitrary operation need not be
finite.  For example, consider the operation $q$ defined by
$$ q(L) = 
\lbrace x \in \Sigma^* \ : \ x \text{ there exists $y \in L$ such that
$x$ is a proper prefix of $y$ } \rbrace .$$
Here by ``$x$ is a proper prefix of $y$'', we mean that $x$ is a prefix of $y$
with $|x| < |y|$.

Let $L = \lbrace a^n b^n \ : \ n \geq 1 \rbrace$.  Then it is
easy to see that the orbit 
$$ {\cal O}_{\lbrace q \rbrace}( L) = \lbrace L, q(L), q^2(L), q^3(L), \ldots
\rbrace$$
is infinite, since the shortest word in
$q^i(L) \ \cap \ a^+ b$ is $a^{i+1} b$.  

The situation is somewhat different if $L$ is regular:

\begin{theorem}
Let $q$ denote the proper prefix operation, and let $L$ be a regular
language accepted by a DFA of $n$ states.  
Then ${\cal O}_{\lbrace q \rbrace}( L) \leq n$,
and this bound is tight.
\end{theorem}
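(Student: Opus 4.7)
The plan is to run every iterate $q^i(L)$ through one common DFA and track only how the final-state set evolves. Fix a complete DFA $M = (Q, \Sigma, \delta, q_0, F)$ of $n$ states for $L$, and define the operator
$$T(X) = \{ s \in Q : \exists z \in \Sigma^+,\ \delta(s,z) \in X \}$$
on $2^Q$. Then $q(L)$ is accepted by $(Q, \Sigma, \delta, q_0, T(F))$, and iterating, $q^i(L)$ is accepted by $M$ with final set $F^{(i)}$, where $F^{(0)} = F$ and $F^{(i+1)} = T(F^{(i)})$. Hence it suffices to bound the number of distinct subsets among $F^{(0)}, F^{(1)}, F^{(2)}, \ldots$

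Next I would introduce the depth function $d : Q \to \{-\infty\} \cup \{0,1,2,\ldots\} \cup \{\infty\}$ given by $d(s) = \sup\{|z| : \delta(s,z) \in F\}$ (with $\sup \emptyset = -\infty$). A short induction on $i$ shows that for every $i \geq 1$,
$$F^{(i)} = \{ s \in Q : d(s) \geq i\}.$$
In particular the sequence $(F^{(i)})_{i \geq 1}$ is a decreasing chain in $2^Q$ that eventually stabilizes at $\{s : d(s) = \infty\}$. Let $r$ denote the number of distinct \emph{finite positive} values attained by $d$. The chain then realizes exactly $r + 1$ distinct sets, and including $F^{(0)}$ the orbit has at most $r + 2$ elements.

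The heart of the argument is the inequality $r + 2 \leq n$ (when $r \geq 1$). Let $v_r$ be the largest finite value of $d$, pick $u_0$ with $d(u_0) = v_r$ witnessed by a word $a_1 \cdots a_{v_r}$, and set $u_j = \delta(u_0, a_1 \cdots a_j)$. Using maximality of $u_0$, a routine check gives $d(u_j) = v_r - j$ for $0 \leq j \leq v_r$; in particular every value in $\{0, 1, \ldots, v_r\}$ is attained, forcing $v_r = r$ and exhibiting $r + 1$ states of pairwise distinct depths. Completeness of $M$ then produces one further state: since $d(u_r) = 0$, every letter $a \in \Sigma$ sends $u_r$ to a state $\delta(u_r, a)$ of depth $-\infty$ (otherwise $u_r$ would start a nonempty path to $F$). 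This $(r+2)$th state proves $n \geq r + 2$. The remaining case $r = 0$ is easy: the chain is already stable at $F^{(1)}$, so the orbit has at most $2 \leq n$ elements when $n \geq 2$, and $n = 1$ is trivial.

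For tightness, $L = \{a^{n-2}\}$ over $\Sigma = \{a\}$ (with $L = \emptyset$ for $n = 1$) is recognized by a minimal complete DFA with exactly $n$ states, and the orbit $L, q(L), q^2(L), \ldots$ terminates at $\emptyset$ only after hitting $n$ pairwise distinct languages. The main obstacle I anticipate is the path-and-completeness counting argument in the third paragraph: the precise identification $r = v_r$ relies on the full chain of depth values appearing along a single witness path, and the essential use of completeness at the accepting endpoint of that path is what distinguishes the bound $n$ from the weaker bounds one would get for partial DFAs (for which the theorem actually fails; e.g.\ $\{a\}$ is recognized by a $2$-state partial DFA but has orbit of size $3$).
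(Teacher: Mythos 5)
Your proposal is correct and takes essentially the same approach as the paper: you run every iterate $q^i(L)$ through the one fixed $n$-state DFA and bound the number of distinct final-state sets, exactly the paper's reduction, and your tightness witness $\{a^{n-2}\}$ is a minor variant of the paper's unary example $\{\epsilon,a,\ldots,a^{n-2}\}$. Your depth-function count, with its explicit appeal to completeness to produce the extra ``dead'' state, is simply a more rigorous rendering of the paper's ``longest directed path that is not a cycle, plus $1$'' argument.
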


\begin{proof}
Let $M = (Q, \Sigma, \delta, q_0, F)$ be an
$n$-state DFA accepting $L$.  Note that a
DFA accepting $q(L)$ is given by $M' = (Q, \Sigma, \delta, q_0, F')$
where 
$$F' = \lbrace q \in Q \ : \ \text{ there exists a path of
length $\geq 1$ from $q$ to a state of $F$ } \rbrace.$$
Reinterpreting this in terms of the underlying transition diagram,
given a directed graph $G$ on $n$ vertices, and a distinguished set of
vertices $F$, we are interested in the number of different sets obtained
by iterating the operation that
maps $F$ to the set of all vertices that can reach a vertex in $F$
by a path of length $\geq 1$.  We claim this is at most $n$.  To see this,
note that if a vertex $v$ is part of any directed cycle,
then once $v$ is included,
further iterations will retain it.    Thus the number of distinct sets
is as long as the longest directed path that is not a cycle, plus $1$
for the inclusion of cycle vertices.  

To see that the bound is tight, consider the language 
$L_n = \lbrace \epsilon, a, a^2, \ldots, a^{n-2} \rbrace$, 
which is accepted by a (complete) unary DFA of $n$ states.  
Then $q(L_n) = L_{n-1}$, so this shows
$|{\cal O}_{\lbrace q \rbrace}( L_n)| = n$. \ \qed
\end{proof}

It is possible for the orbit under a single operation
to be infinite even if the operation is
(in the terminology of the next section) expanding and
inclusion-preserving.  As an example, consider the operation
of fractional exponentiation, defined by
\begin{displaymath}
n(L) = \lbrace x^\alpha \ : \alpha \geq 1 \text{ rational } \rbrace 
= \bigcup_{x \in L} x^+ p(\lbrace x \rbrace).
\end{displaymath}

\begin{proposition}
Let $L = \lbrace ab \rbrace$.
Then the orbit ${\cal O}_{\lbrace n \rbrace} (L)$ is infinite.
\end{proposition}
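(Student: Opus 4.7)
The plan is as follows. Since for every word $x$ one has $x = x^1 \in x^+ p(\{x\})$, the operator $n$ is expanding: $L \subseteq n(L)$ for every $L$. Hence the orbit forms an ascending chain $L \subseteq n(L) \subseteq n^2(L) \subseteq \cdots$, and it suffices to show the chain is strictly ascending. To separate the levels I would track, for each $i$, the quantity $M_i := \sup\{k : a^k \in \fact(n^i(L))\}$, showing simultaneously that $M_i \geq i$ and $M_i \leq i$ for every $i \geq 1$.

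For the lower bound I would exhibit explicit witnesses by the recursion $z_1 := aba$ and $z_{i+1} := z_i\,z_i\,a$. Since each $z_i$ begins with $a$, the single letter $a$ is a prefix of $z_i$, so $z_{i+1} \in z_i^+\, p(\{z_i\}) \subseteq n(n^i(L)) = n^{i+1}(L)$. A direct induction then shows that $z_i$ starts with $ab$ and ends with $a^i$, whence the boundary between the two copies of $z_i$ inside $z_{i+1}$ contributes the factor $a^i \cdot a = a^{i+1}$, giving $M_{i+1} \geq i+1$.

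For the matching upper bound $M_i \leq i$ I would induct on $i$ jointly with the auxiliary quantity $S_i := \sup\{k : n^i(L) \cap \Sigma^* a^k \neq \emptyset\}$, the maximal length of a trailing $a$-run in $n^i(L)$. The key invariant, propagated through $n$ by induction, is that every word of $n^i(L)$ begins with $ab$; consequently the prefix $a$-run of any $y \in n^i(L)$, and of any nonempty prefix $u$ of such a $y$, equals $1$. Writing an arbitrary $w \in n^{i+1}(L)$ as $y^j u$, I would classify a maximal $a$-run of $w$: either it lies inside a single copy of $y$ or inside $u$ (length $\leq M_i$), or it straddles a $y \cdot y$ or $y \cdot u$ junction (length $\leq S_i + 1$, since the prefix $a$-run of the following block is $1$). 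A similar case split for the trailing $a$-run yields $S_{i+1} \leq \max(S_i + 1, M_i)$, and starting from $M_1 = S_1 = 1$ these recurrences force $M_i, S_i \leq i$.

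Combining the two bounds, $z_i \in n^i(L) \setminus n^{i-1}(L)$, so the chain is strictly ascending and the orbit is infinite. I expect the upper bound step to be the main obstacle: without the invariant that every word of $n^i(L)$ begins with $ab$, the naive boundary estimate $M_{i+1} \leq 2 M_i$ only gives $M_i \leq 2^{i-1}$, which is too weak to guarantee $z_i \notin n^{i-1}(L)$ beyond the first few levels. Carrying the ``starts with $ab$'' invariant in parallel with the induction on $M_i$ and $S_i$ is what keeps $M_i$ growing linearly and produces strict inclusion at every step.
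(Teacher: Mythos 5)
Your proof is correct and takes essentially the same route as the paper, which simply exhibits the witnesses $aba^i \in n^i(\{ab\})$ with $aba^i \notin n^j(\{ab\})$ for $j < i$: in both cases the iterates are separated by words whose maximal $a$-run grows by exactly one per application of $n$. Your recursive witnesses $z_i$ play the role of the paper's simpler $aba^i$, and your $M_i$, $S_i$ recurrences (with the ``starts with $ab$'' invariant) merely supply the upper-bound details that the paper leaves implicit.
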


\begin{proof}
We have $aba^i \in n^i(\lbrace ab \rbrace)$, but
$a ba^i \not\in n^j(\lbrace ab \rbrace)$ for $j < i$.   \ \qed
\end{proof}

\section{Kuratowski identities}

Let $a: 2^{\Sigma^*} \rightarrow 2^{\Sigma^*}$ be an operation on
languages.  Suppose $a$ satisfies the following three properties:

\begin{enumerate}
\item $L$ is a subset of $a(L)$ (expanding);
\item If $L \subseteq M$ then $a(L) \subseteq a(M)$ (inclusion-preserving);
\item $a(a(L)) = a(L)$ (idempotent).
\end{enumerate}

Then we say $a$ is a {\it closure operation}. Examples of closure
operations include $k, e, p, s, f, $ and $w$.

Note that if $a,b$ are closure operations, then their composition $ab$ trivially satisfies properties
1 and 2 above, but may not satisfy property 3. 
For example, $pk$ is not idempotent, as can be seen by examining its action on
$L = \lbrace ab \rbrace$ ($aab \not\in pk(L)$, but $aab \in pkpk(L)$).


\begin{lemma}
Let $a \in \lbrace k,e \rbrace$ and $b \in \lbrace p,s,f,w \rbrace$.
Then $aba \equiv bab \equiv ab$.
\label{kpk}
\end{lemma}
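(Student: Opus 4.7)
The plan is to reduce both equivalences to the single key inclusion
\[
b(a(L)) \subseteq a(b(L))
\]
valid for every $a \in \{k,e\}$, every $b \in \{p,s,f,w\}$, and every language $L$. Once this is available, both $aba \equiv ab$ and $bab \equiv ab$ drop out via purely formal manipulations that use only the fact, already noted, that each of $k, e, p, s, f, w$ is a closure operation.

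To prove the key inclusion I would unwind definitions. Any $y \in a(L)$ has the form $\ell_1 \ell_2 \cdots \ell_n$ with each $\ell_i \in L$ (with $n \geq 0$ if $a = k$ and $n \geq 1$ if $a = e$). For $b = p$, a prefix of $y$ has the form $\ell_1 \cdots \ell_{i-1} \ell'_i$ where $\ell'_i$ is a prefix of $\ell_i$, and every factor lies in $p(L)$, so the prefix is in $(p(L))^* \subseteq a(p(L))$. The case $b = s$ is symmetric; $b = f$ combines the two, since a factor of $y$ is a suffix of some $\ell_i$, then complete blocks, then a prefix of some $\ell_j$; and $b = w$ uses that a single scattered selection of positions inside $\ell_1 \cdots \ell_n$ can be partitioned block by block, so the subword rewrites as a concatenation $u_1 u_2 \cdots u_n$ in which each $u_i$ is a subword of $\ell_i$. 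The case $a = e$ additionally uses that $\epsilon \in b(L)$ whenever $L \neq \emptyset$, to accommodate an ``empty prefix'' at the left end of the decomposition; when $L = \emptyset$ the inclusion is trivial.

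With the key inclusion in hand, the two equivalences are two lines each. For $aba \equiv ab$, applying $a$ to the key inclusion and using idempotence of $a$ yields
\[
aba(L) = a(b(a(L))) \subseteq a(a(b(L))) = ab(L),
\]
while the reverse inclusion $ab(L) \subseteq aba(L)$ holds because $a$ is expanding and $b$ is inclusion-preserving. For $bab \equiv ab$, instantiating the key inclusion with $b(L)$ in place of $L$ and using idempotence of $b$ yields
\[
bab(L) = b(a(b(L))) \subseteq a(b(b(L))) = ab(L),
\]
while the reverse inclusion $ab(L) \subseteq bab(L)$ holds because $b$ is expanding.

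The main obstacle is the combinatorial verification of the key inclusion, and within it the subword case $b = w$ is the most delicate, since one must observe that the positions witnessing a subword of a concatenation can always be split, block by block, into subwords of the individual factors. The other three cases for $b$ are straightforward, and once the key inclusion is in place the remainder of the proof is mechanical.
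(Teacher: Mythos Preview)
Your proposal is correct and follows essentially the same route as the paper's proof: both isolate the key inclusion $ba(L)\subseteq ab(L)$ via the combinatorial decomposition of a prefix/suffix/factor/subword of a concatenation, and then derive $aba\equiv ab$ and $bab\equiv ab$ by applying the closure axioms (expanding, inclusion-preserving, idempotent) and by instantiating the key inclusion at $b(L)$. The only difference is presentational---the paper writes out only the case $b=p$ and leaves the others as ``similar,'' whereas you sketch all four and flag the $a=e$ edge case explicitly.
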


\begin{proof}
We prove the result only for $b = p$; the other results are similar.

Since $L \subseteq a(L)$, we get $p(L) \subseteq pa(L)$, and
then $ap(L) \subseteq apa(L)$.
It remains to see $apa(L) \subseteq ap(L)$.

Any element of $a(L)$ is either $\epsilon$ or of the form
$t = t_1 t_2 \cdots t_n$ for some $n \geq 1$, where each $t_i \in L$.
Then any prefix of $t$ looks like 
$t_1 t_2 \cdots t_{i-1} p_i$ for some $i \geq 1$,
where $p_i$ is a prefix of $t_i$, and hence in $p(L)$.
But each $t_i$ is also in $p(L)$, so
this shows 
\begin{equation}
pa(L) \subseteq ap(L). \label{paap} 
\end{equation}
Since $a$ is a closure operation,
$apa(L) \subseteq aap(L) = ap(L)$.

Similarly, we have $ap(L) \subseteq pap(L)$. Substituting
$p(L)$ for $L$ in (\ref{paap}) gives
$pap(L) \subseteq app(L) = ap(L)$. \ \qed
\end{proof}


\begin{lemma}
The operations $kp,ks,kf,kw,ep,es,ef$ and $ew$ are closure
operations.
\label{co-lemma}
\end{lemma}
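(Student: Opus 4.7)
The plan is to reduce the claim to Lemma~\ref{kpk}, since that lemma gives us exactly the non-trivial identity $bab \equiv ab$ for the relevant letter pairs $a \in \{k,e\}$ and $b \in \{p,s,f,w\}$. Each operation listed in the lemma is a composition $ab$ of a closure operation $a$ with a closure operation $b$ (both $k,e$ and $p,s,f,w$ appear in the list of closure operations given before Lemma~\ref{kpk}), so I would verify the three defining properties of a closure operation in turn for $ab$.

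First I would dispatch properties (1) and (2). Expansion is immediate: $L \subseteq b(L) \subseteq ab(L)$ because both $a$ and $b$ are expanding. Inclusion-preservation is also immediate: if $L \subseteq M$ then $b(L) \subseteq b(M)$, and applying $a$ gives $ab(L) \subseteq ab(M)$. These two steps use only that $a$ and $b$ are themselves closure operations, so they go through uniformly for all eight cases.

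The heart of the argument is property (3), idempotency, i.e.\ $abab \equiv ab$. Here I would invoke Lemma~\ref{kpk}: since $bab \equiv ab$, we have
\begin{equation*}
abab \;\equiv\; a(bab) \;\equiv\; a(ab) \;\equiv\; aab \;\equiv\; ab,
\end{equation*}
where the last step uses that $a$ itself is idempotent ($aa \equiv a$). This same chain works for every one of the eight compositions listed, since the hypothesis of Lemma~\ref{kpk} is satisfied in each case.

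I do not expect any real obstacle: all the substantive work is already packaged in Lemma~\ref{kpk}, and what remains is a two-line verification. If anything, the only point worth being careful about is flagging that $p,s,f,w$ are indeed closure operations (so that $b$ is idempotent, which is what makes Lemma~\ref{kpk} available), but this was explicitly stated in the paragraph introducing closure operations.
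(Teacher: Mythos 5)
Your proof is correct and follows the paper's own argument: properties (1) and (2) are dispatched as trivial for a composition of closure operations, and idempotency is obtained from Lemma~\ref{kpk} via exactly the chain $abab \equiv a(bab) \equiv a(ab) \equiv ab$ that the paper uses (with $pkp \equiv kp$ as its instance). No issues.
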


\begin{proof}
We prove the result for $kp$, with the other results being similar.
It suffices to prove property 3.  
From Lemma~\ref{kpk} we have $pkp(L) = kp(L)$.  Applying $k$ to both
sides, and using the idempotence of $k$,
we get $kpkp(L) = kkp(L) = kp(L)$. \ \qed
\end{proof}

If $a$ is a closure operation, and $c$ denotes complement,
then it is well-known (and shown, for
example, in \cite{Peleg}) that $acacaca \equiv aca$.  However, we will
need the following more general observation, which seems to be new:

\begin{theorem}
Let $x, y$ be closure operations.  Then $xcycxcy \equiv xcy$.
\label{kura}
\end{theorem}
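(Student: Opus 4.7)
The plan is to reduce the seven-symbol identity to the claim that a single operator fixes a single set. I would first introduce the abbreviations $N = y(L)$ and $B = xcN$, so that the right-hand side of the claim is $xcy(L) = B$ while the left-hand side is $xcyc(B)$. The theorem then becomes the compact assertion that $xcyc(B) = B$ for every $B$ of the form $xcy(L)$.

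Two elementary facts about $B$ drive both inclusions. Idempotence of $x$ gives $xB = x(xcN) = xcN = B$, so $B$ is $x$-closed. Expansiveness of $x$ gives $cN \subseteq xcN = B$, which on complementation becomes $cB \subseteq N$. Similarly, $N$ is $y$-closed.

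For the inclusion $B \subseteq xcyc(B)$, I would push $cB \subseteq N$ outward: monotonicity of $y$ together with the $y$-closedness of $N$ gives $y(cB) \subseteq N$; complementation reverses this to $cy(cB) \supseteq cN$; a final application of $x$ yields $xcyc(B) \supseteq xcN = B$. For the reverse inclusion $xcyc(B) \subseteq B$, I would instead start from expansiveness of $y$, which gives $cB \subseteq y(cB)$; complementation yields $cy(cB) \subseteq B$; applying $x$ and using that $B$ is $x$-closed gives $xcyc(B) \subseteq xB = B$.

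The only real hurdle is spotting the reframing at the start; once one sees that the seven-fold composition is merely asking whether $xcyc$ fixes a particular set, each inclusion uses one closure axiom per operator combined with the order-reversal of complement and monotonicity of $x$ and $y$. The argument is a direct generalization of the classical single-operator proof of $acacaca \equiv aca$ from \cite{Peleg}, with the inner and outer occurrences of $a$ now played by $y$ and $x$ respectively; the fact that the two directions of inclusion use, symmetrically, expansiveness of $x$ paired with idempotence of $y$ and expansiveness of $y$ paired with idempotence of $x$, is what makes the generalization go through without extra hypotheses.
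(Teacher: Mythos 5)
Your proof is correct and follows essentially the same route as the paper's: both directions use exactly the same chain of expansiveness, monotonicity, complement-reversal, and idempotence steps, with your abbreviations $N = y(L)$, $B = xcy(L)$ and the fixed-point reframing ``$xcyc$ fixes $B$'' being only a notational repackaging of the paper's inclusions $cxcy \subseteq y$ and $cy \subseteq xcy$.
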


\begin{proof}
$xcycxcy \subseteq xcy$:  We have $L \subseteq y(L)$ by the expanding
property.  Then $cy(L) \subseteq c(L)$.  By the inclusion-preserving
property we have $xcy(L) \subseteq xc(L)$.  Since this identity holds
for all $L$, it holds in particular for $cxcy(L)$.  Substituting, we
get $xcycxcy(L) \subseteq xccxcy(L)$.  But $xccxcy(L) = xcy(L)$ by
the idempotence of $x$.

$xcy \subseteq xcycxcy$:  We have $L \subseteq x(L)$ by the expanding
property.  Then, replacing $L$ by $cy(L)$, 
we get $cy \subseteq xcy$.  Applying $c$ to both sides, we get
$cxcy \subseteq ccy = y$.  Applying $y$ to both sides, and using
the inclusion-preserving property and idempotence, we get
$ycxcy \subseteq yy = y$.  Applying $c$ to both sides, we get
$cy \subseteq cycxcy$.  Finally, applying $x$ to both sides and using
the inclusion-preserving property, we get $xcy \subseteq xcycxcy$.
\ \qed
\end{proof}

\begin{remark}
Theorem~\ref{kura} would also hold if $c$ were replaced by any
inclusion-reversing operation satisfying $cc \equiv \epsilon$.
\end{remark}

As a corollary, we get \cite{Peleg,BGS}:

\begin{corollary}
If $S = \lbrace a, c \rbrace$, where $a$ is any closure operation,
and $L$ is any language,
the orbit ${\cal O}_{S} (L)$ contains at most $14$ distinct languages.
\end{corollary}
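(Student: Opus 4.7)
The plan is to bound $|\mathcal{O}_S(L)|$ by counting the elements of the monoid $M$ generated by $\{a, c\}$ under composition, modulo the equivalence $\equiv$, since every member of the orbit has the form $x(L)$ for some $x \in M$.

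I would first identify three defining relations in $M$: (i) $aa \equiv a$ by idempotence of the closure operation $a$; (ii) $cc \equiv \epsilon$ since complement is an involution; and (iii) $acacaca \equiv aca$, obtained by specializing Theorem~\ref{kura} to $x = y = a$. These three relations should suffice to collapse the free monoid on $\{a,c\}$ down to a finite set of representatives.

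Next I would perform a normal-form reduction in two stages. Using (i) and (ii), any word over $\{a,c\}$ is equivalent to the empty word or to a word alternating strictly between $a$ and $c$. Then I would observe that any alternating word of length at least $7$ beginning with $a$ either equals $acacaca$ or contains $acacaca$ as a factor, and that every alternating word of length $\geq 8$ likewise contains $acacaca$ as a factor (whether it begins with $a$ or with $c$). In every such case relation (iii) shortens the word, possibly after one further application of (i) to absorb a newly created $aa$. Iterating terminates, so each element of $M$ admits a representative that is an alternating word of length at most $7$, other than $acacaca$ itself.

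Finally, I would enumerate these representatives:
\[
\epsilon,\ a,\ c,\ ac,\ ca,\ aca,\ cac,\ acac,\ caca,\ acaca,\ cacac,\ acacac,\ cacaca,\ cacacac,
\]
exactly $14$ words; hence $|\mathcal{O}_S(L)| \leq 14$. The main (and essentially only) point requiring care is the combinatorial claim in the previous paragraph — namely, that among alternating words of length $\geq 7$, only $cacacac$ fails to contain $acacaca$ as a factor, so that the exceptional representative of length $7$ is unique and the reduction halts at exactly these fourteen candidates.
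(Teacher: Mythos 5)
Your proof is correct and follows essentially the same route as the paper: the paper simply lists the same $14$ representatives $\epsilon, a, c, ac, ca, \ldots, cacacac$, relying on $aa \equiv a$, $cc \equiv \epsilon$, and the specialization $acacaca \equiv aca$ of Theorem~\ref{kura} with $x = y = a$, exactly as you do. Your write-up just makes explicit the normal-form reduction that the paper leaves implicit, and your combinatorial claim (only $cacacac$ among long alternating words avoids the factor $acacaca$) is accurate.
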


\begin{proof}
The $14$ languages are given by the image of $L$ under the 14 operations
$$\epsilon, a, c, ac, ca, aca, cac, acac, caca,
acaca, cacac, acacac, cacaca, cacacac .$$ 
\ \qed
\end{proof}

\begin{remark}
Theorem~\ref{kura}, together with Lemma~\ref{co-lemma}, thus gives 
196 separate identities.
\end{remark}

In a similar fashion,
we can obtain many kinds of Kuratowski-style identities involving
$k, e, c, p, s, f, w $ and $r$.    

\begin{theorem}
Let $a \in \lbrace k, e \rbrace$ and $b \in \lbrace p,s,f,w \rbrace$.
Then we have the following identities:
\begin{enumerate}[start=4]
\item $abcacaca \equiv abca$
\item $bcbcbcab \equiv bcab$
\item $abcbcabcab \equiv abcab$
\end{enumerate}
\end{theorem}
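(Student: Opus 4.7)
The plan is to handle identities 4 and 5 by a ``letter-insertion'' argument combining Lemma~\ref{kpk} with the standard Kuratowski--$14$ identities $acacaca \equiv aca$ and $bcbcbcb \equiv bcb$, both of which are special cases of Theorem~\ref{kura}. For identity 4, I would first use $ab \equiv aba$ to rewrite $abcacaca$ as $abacacaca$, then split off the suffix $acacaca$ and reduce it to $aca$, obtaining $abaca$, and finally contract via $aba \equiv ab$ to get $abca$. Identity 5 is entirely symmetric: $ab \equiv bab$ turns $bcbcbcab$ into $bcbcbcbab$, the prefix $bcbcbcb$ reduces to $bcb$, and $bab \equiv ab$ contracts $bcbab$ to $bcab$.

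Identity 6 is the main obstacle: $abcbcabcab$ contains no $aba$ or $bab$ factor to collapse, and does not match the $xcycxcy$ pattern of Theorem~\ref{kura} for any single pair of closures $(x,y)$, so the insertion trick used for 4 and 5 does not straightforwardly apply. My plan is to establish a mild generalization: for closures $x, y_1, y_2$ satisfying $y_1(L) \subseteq y_2(L)$ for every language $L$, one has $xcy_1cxcy_2 \equiv xcy_2$. The forward inclusion uses only that $y_1$ is expanding and proceeds exactly as in the proof of Theorem~\ref{kura}, via the substitution $L \mapsto cxcy_2(L')$ and the simplifications $cc \equiv \epsilon$ and $xx \equiv x$. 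The backward inclusion requires one small change: after deriving $y_2 \supseteq cxcy_2$ as before, apply $y_1$ rather than $y_2$ to get $y_1 y_2 \supseteq y_1 cxcy_2$, and then invoke the hypothesis $y_1 \subseteq y_2$ together with idempotence of $y_2$ to obtain $y_1 y_2 \subseteq y_2 y_2 = y_2$; chaining yields $y_2 \supseteq y_1 cxcy_2$, and applying $c$ and then $x$ completes the argument.

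Identity 6 then follows by taking $x = ab$, $y_1 = b$, $y_2 = ab$: all three operations are closures by Lemma~\ref{co-lemma} and the hypothesis on $b$, and the required inclusion $b \subseteq ab$ holds because $a$ is expanding, so $b(L) \subseteq a(b(L)) = ab(L)$ for all $L$. As a byproduct, the same generalization applied with $x = b$, $y_1 = b$, $y_2 = ab$ recovers identity 5 as an alternative to the insertion argument.
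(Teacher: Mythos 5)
Your handling of identities 4 and 5 coincides with the paper's argument: the paper proves 4 by composing the Kuratowski identity $acacaca \equiv aca$ of Theorem~\ref{kura} with $ab$ on the left and contracting $aba \equiv ab$ via Lemma~\ref{kpk}, and your identity-5 derivation is the obvious ``similar'' mirror of this. For identity 6 you take a genuinely different route, and it is correct: your generalization ``$xcy_1cxcy_2 \equiv xcy_2$ for closure operations with $y_1 \subseteq y_2$'' is valid as you argue (the forward inclusion needs only that $y_1$ is expanding and $x$ idempotent; the backward one only that $y_1$ is inclusion-preserving, $y_1 \subseteq y_2$, and $y_2$ idempotent), and instantiating $x = ab$, $y_1 = b$, $y_2 = ab$ (legitimate by Lemma~\ref{co-lemma} and the fact that $b \subseteq ab$ since $a$ is expanding) yields 6, while $x=b$, $y_1=b$, $y_2=ab$ re-derives 5. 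One correction to your motivation, though: the insertion trick does handle 6, contrary to your claim. Rewrite the trailing $ab$ as $bab$, so $abcbcabcab \equiv (abcbcabcb)\,ab$; the prefix $abcbcabcb$ is exactly $xcycxcy$ with $x = ab$, $y = b$ (both closures by Lemma~\ref{co-lemma}), hence collapses to $abcb$ by Theorem~\ref{kura}, and then $(abcb)\,ab = abcbab \equiv abcab$ by $bab \equiv ab$. This is presumably the ``similar'' proof the paper intends. What your lemma buys is uniformity --- it subsumes Theorem~\ref{kura} (take $y_1 = y_2$) and gives 5 and 6 in one stroke without any word surgery --- at the cost of proving a slightly more general statement; the paper's route stays entirely within Theorem~\ref{kura} plus Lemma~\ref{kpk}.
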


\begin{proof}
We only prove the first; the rest are similar.  From Theorem~\ref{kura} we get
$acacaca \equiv aca$.  Hence $ab(acacaca) \equiv ab(aca)$, or
equivalently, $aba(cacaca) \equiv aba(ca)$.
Since $aba \equiv ab$ from Lemma~\ref{kpk}, we get
$abcacaca \equiv abca$. \ \qed
\end{proof}



\section{Additional identities}

In this section we prove some additional
identities connecting the operations
$\lbrace k,e,c,p,s,f,w,r \rbrace$.

\begin{theorem}
We have
\begin{enumerate}[start=7]
\item $rp \equiv sr$
\item $rs \equiv pr$
\item $rf \equiv fr$
\item $rc \equiv cr$
\item $rk \equiv kr$
\item $rw \equiv wr$
\item $ps \equiv sp \equiv f$
\item $pf \equiv fp \equiv f$
\item $sf \equiv fs \equiv f$
\item $pw \equiv wp \equiv sw \equiv ws \equiv fw \equiv wf \equiv w$
\item $kw \equiv wk$
\item $rkw \equiv kw$
\item $ek \equiv ke \equiv k$
\item $fks \equiv pks$
\label{fks}
\item $fkp \equiv skp$ \label{fkp}
\item $rkf \equiv skf \equiv pkf \equiv fkf \equiv kf$ \label{pkf}
	\label{skf}
\end{enumerate}
\end{theorem}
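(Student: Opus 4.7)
The plan is to reduce the entire lemma to the single identification $kf(L) = \alphabet(L)^{*}$. For the upper bound, every factor of a word of $L$ uses only letters from $\alphabet(L)$, and a concatenation of such factors stays within $\alphabet(L)^{*}$, giving $kf(L) \subseteq \alphabet(L)^{*}$. For the lower bound, each letter $a \in \alphabet(L)$ appears in some word of $L$ and is therefore a length-one factor, so $\alphabet(L) \cup \{\epsilon\} \subseteq f(L)$; applying Kleene star yields $\alphabet(L)^{*} \subseteq (f(L))^{*} = kf(L)$.

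Once this identification is in hand, the four equivalences fall out at once. The free monoid $\alphabet(L)^{*}$ is obviously closed under reversal and under the operations of taking prefixes, suffixes, and factors, since each of these operations returns only words over the same subalphabet of $\Sigma$, and every such word already lies in $\alphabet(L)^{*}$. Hence $r$, $s$, $p$, and $f$ each fix $kf(L)$, so $rkf(L) = skf(L) = pkf(L) = fkf(L) = kf(L)$ for every language $L$. Edge cases such as $L = \emptyset$ (where $kf(L) = \{\epsilon\} = \alphabet(L)^{*}$) are handled uniformly.

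The main obstacle is minor: it is simply spotting the collapse $kf(L) = \alphabet(L)^{*}$, which short-circuits what would otherwise be four separate case analyses. An alternative, more compositional route would derive $fkf \equiv kf$ as a direct instance of Lemma~\ref{kpk} with $a = k$ and $b = f$, obtain $pkf \equiv kf$ from the inclusion $pk \subseteq kp$ (established inside the proof of Lemma~\ref{kpk}) together with identity~14 ($pf \equiv f$), obtain $skf \equiv kf$ symmetrically via identity~15, and reduce $rkf$ through identities~9 and~11 to $kfr$. That last step, however, still requires observing that $kf$ is invariant under reversal, which is essentially the alphabet fact again. The uniform argument through $\alphabet(L)^{*}$ therefore seems the cleaner presentation.
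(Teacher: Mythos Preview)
Your argument is correct and disposes of item~\ref{pkf} in full. The identification $kf(L)=\alphabet(L)^{*}$ holds for exactly the reasons you give, and a free monoid over a subalphabet of $\Sigma$ is plainly fixed by each of $r$, $p$, $s$, and $f$, so all four equalities drop out together. The edge case $L=\emptyset$ is handled as you note.

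This is, however, a genuinely different route from the paper's. For item~\ref{pkf} the paper proves only the piece $pkf\equiv kf$ explicitly, and does so by pure syntactic rewriting through earlier identities: $pkf \equiv pk(ps) \equiv (pkp)s \equiv (kp)s \equiv k(ps) \equiv kf$, invoking identity~13 ($ps\equiv f$) twice and Lemma~\ref{kpk} ($pkp\equiv kp$) once; the remaining three parts are left as ``similar.'' Your approach instead introduces a semantic description of $kf$ --- which is precisely Lemma~\ref{lem:kf}, stated and proved \emph{later} in the paper for other purposes --- and uses it to settle all four parts uniformly. The trade-off is clear: the paper's method needs no new ingredient beyond identities already on the table, while yours is shorter and treats the four cases symmetrically but anticipates a lemma the paper has not yet recorded. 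Your closing paragraph in fact reconstructs the paper's compositional argument as an alternative and correctly observes that the reversal case $rkf$ still forces the alphabet observation; so you have essentially found both proofs.
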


\begin{proof}
All of these are relatively straightforward.  
To see (\ref{fks}), note that $p(L) \subseteq f(L)$ for all $L$, and
hence $pks(L) \subseteq fks(L)$.  Hence it suffices to show the reverse
inclusion.

Note that every element of $ks(L)$ is either $\epsilon$
or can be written $x = s_1 s_2 \cdots s_n$ for some $n \geq 1$, where each
$s_i \in s(L)$.  In the latter case,
any factor of $x$ must be of the form
$y = s''_i s_{i+1} \cdots s_{j-1} s'_j$, where
$s''_i $ is a suffix of $s_i$ and $s'_j$ is a prefix of $s_j$.
Then $s''_i s_{i+1} \cdots s_{j-1} s_j \in ks(L)$ and hence
$y \in pks(L)$. 

Similarly, we have
$pkf \equiv pk(ps) \equiv (pkp)s \equiv (kp)s \equiv k(ps) = kf$,
which proves part of (\ref{pkf}).
\end{proof}

\newpage
\begin{theorem}
We have
\begin{enumerate}[start=23]
\item 
$pcs(L) %
= \Sigma^*$ or $\emptyset$. \label{pcs} \label{first1}
\item The same result holds for 
$pcf, %
fcs, %
fcf, %
scp, %
scf, %
fcp, %
wcp, %
wcs, %
wcf, %
pcw, %
scw,$ %
\\
$fcw, %
wcw %
$.
	\label{pcf}
\end{enumerate}
\end{theorem}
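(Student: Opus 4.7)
The plan is to prove item~\ref{pcs} explicitly and then indicate how the same template disposes of all thirteen compositions in item~\ref{pcf}. For $pcs$, I would split on whether $cs(L)$ is empty: if it is, then $pcs(L)=p(\emptyset)=\emptyset$ and we are done. Otherwise I would pick a witness $u\in cs(L)$, i.e., some word with $u\notin s(L)$, and for an arbitrary $x\in\Sigma^*$ form the word $xu$. If $xu$ were a suffix of some $y\in L$, then $u$, being a suffix of $xu$, would itself be a suffix of $y$, contradicting $u\notin s(L)$. Hence $xu\in cs(L)$, and since $x$ is a prefix of $xu$ we obtain $x\in pcs(L)$, so $pcs(L)=\Sigma^*$.

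For the remaining compositions $acb(L)$ with $a,b\in\{p,s,f,w\}$ listed in item~\ref{pcf}, the same two-case argument applies. In the nontrivial case I would again choose a witness $u\notin b(L)$ and form $z$ by concatenating $u$ and $x$ on one side or the other, subject to two requirements: (i)~$x$ must bear the relation dictated by the outer operation $a$ (prefix, suffix, factor, or subword) to $z$, and (ii)~$u$ must be placed inside $z$ so that $z\in b(L)$ would force $u\in b(L)$, yielding a contradiction. The required transitivity for (ii) is immediate: a prefix of a prefix is a prefix, a suffix of a suffix is a suffix, and a factor (resp.\ subword) of a factor (resp.\ subword) is a factor (resp.\ subword). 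The choice of side for $u$ is governed by $b$: for $b=p$ put $u$ on the left of $x$; for $b=s$ put $u$ on the right; for $b\in\{f,w\}$ either side works.

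The main step, such as it is, amounts to filling in a small table: for each of the fourteen listed compositions, verify that at least one of $z=xu$ or $z=ux$ simultaneously satisfies (i) and (ii). For instance, $wcp$ uses $z=ux$ (so $u$ is a prefix and $x$ is a subword of $z$), $pcf$ uses $z=xu$ (so $x$ is a prefix and $u$ is a suffix, hence a factor), and $wcw$ admits either order. Once this table is in place, the one-line contradiction argument closes every case uniformly. The two non-examples $pcp$ and $scs$ fall outside the scheme precisely because they would require $u$ and $x$ to occupy the same end of $z$; and indeed a simple example such as $L=a\Sigma^*$ over $\{a,b\}$, for which $pcp(L)=\{\epsilon\}\cup b\Sigma^*$, shows that the conclusion genuinely fails there.
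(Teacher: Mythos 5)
Your proposal is correct and follows essentially the same route as the paper: split on whether the inner closure equals $\Sigma^*$, and otherwise take a witness $u\notin b(L)$ and pad it with an arbitrary $x$ on the side forced by $b$ (the paper phrases this as ``$s(L)$ omits $v$, hence omits $\Sigma^*v$, so $\Sigma^*v\subseteq cs(L)$ and $pcs(L)=\Sigma^*$''). Your explicit side-placement table and the $pcp$/$scs$ counterexample merely make precise the paper's ``the remaining statements are proved similarly,'' so no further comment is needed.
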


\begin{proof}
Let us prove the first statement.
Either $s(L) = \Sigma^*$, or $s(L)$ omits some word $v$.  In the former case,
$cs(L) = \emptyset$, and so $pcs(L) = \emptyset$.  In the latter case, we have
$s(L)$ omits $v$, so $s(L)$ must also omit $\Sigma^* v$ (for otherwise, if
$xv \in f(L)$ for some $x$, then $v \in s(L)$).  So $\Sigma^* v \subseteq
cs(L)$.  Hence $pcs(L) = \Sigma^*$.

The remaining statements are proved similarly. \ \qed
\end{proof}

The following result was proved in \cite[Theorems 2 and 3]{BGS}.

\begin{lemma}\label{lem:bgs}
 We have $ecece\equiv cece$. 
\end{lemma}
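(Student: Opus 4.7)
The plan is to reduce the identity to a structural claim about $cece(L)$. The inclusion $cece(L) \subseteq ecece(L)$ is immediate from $e$ being expanding. For the reverse direction, note that $ecece(L) = (cece(L))^+$, so $ecece(L) \subseteq cece(L)$ is equivalent to $cece(L)$ being closed under concatenation. The whole proof thus reduces to showing: if $u, v \in cece(L)$, then $uv \in cece(L)$.

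I would argue this by contradiction. Suppose $u, v \in cece(L)$ but $uv \in ece(L) = (\overline{L^+})^+$, so that $uv = w_1 w_2 \cdots w_k$ with each $w_i \in \overline{L^+}$; dispose of the trivial cases $u = \epsilon$ or $v = \epsilon$ separately, and otherwise assume $u$, $v$, and each $w_i$ are nonempty. Look at the position inside $w_1 \cdots w_k$ where $u$ ends and $v$ begins. If this split falls at a factor boundary, then both $u$ and $v$ are themselves products of elements of $\overline{L^+}$, placing both in $ece(L)$ and contradicting $u, v \in cece(L)$.

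The genuinely interesting case is when the split lies strictly inside some factor $w_j = w_j' w_j''$ with both parts nonempty. Here I invoke the defining feature of positive closure: $L^+$ is itself closed under concatenation. So $w_j' \in L^+$ and $w_j'' \in L^+$ would force $w_j \in L^+$, contradicting $w_j \in \overline{L^+}$; hence at least one of $w_j'$, $w_j''$ lies in $\overline{L^+}$. In either subcase, the corresponding extended product ($u = w_1 \cdots w_{j-1} w_j'$ or $v = w_j'' w_{j+1} \cdots w_k$) belongs to $(\overline{L^+})^+ = ece(L)$, again contradicting membership of $u$ or $v$ in $cece(L)$. The only real obstacle is spotting the right reformulation---closure of $cece(L)$ under concatenation---after which the argument pivots on the one-line semigroup fact that $L^+ \cdot L^+ \subseteq L^+$.
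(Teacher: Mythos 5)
Your proof is correct, and it is worth noting that the paper itself gives no argument for this lemma at all: it simply imports it from Brzozowski--Grant--Shallit (\cite[Theorems 2 and 3]{BGS}). So your self-contained derivation is a genuine alternative to citation. Your route is clean: the inclusion $cece(L)\subseteq ecece(L)$ is just the expanding property of $e$, and since $ecece(L)=(cece(L))^+$, the reverse inclusion reduces to showing $cece(L)=\overline{(\overline{L^+})^+}$ is closed under concatenation. Your cut-point analysis does this correctly: writing $uv=w_1\cdots w_k$ with nonempty $w_i\in\overline{L^+}$ (the $\epsilon$ cases and the removal of empty factors are easy and you flag them), either the $u$/$v$ boundary falls between two factors, forcing $u\in(\overline{L^+})^+$, or it splits some $w_j=w_j'w_j''$, and then $L^+L^+\subseteq L^+$ forces at least one of $w_j',w_j''$ into $\overline{L^+}$, again putting $u$ or $v$ into $(\overline{L^+})^+$ and contradicting $u,v\in cece(L)$. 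In effect you prove the general fact that if $\overline{M}$ is concatenation-closed then so is $\overline{M^+}$, applied with $M=ce(L)$; this is elementary, uses nothing beyond the semigroup property of $L^+$, and notably does not follow from the paper's general Kuratowski-style Theorem~\ref{kura} (whose identities have a different shape), which is presumably why the paper had to cite \cite{BGS} for it. The only cosmetic difference from a fully written-out argument is that the ``drop the empty factors'' step deserves one explicit sentence, but that is a trivial point.
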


\begin{theorem}
Let $L$ be any language. 
\begin{enumerate}[start=25]
\item We have $kckck(L)=ckck(L)\cup\{\epsilon\}$. \label{kckck}
\end{enumerate}
\end{theorem}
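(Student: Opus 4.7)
The plan is to exploit that $kckck(L) = \bigl(ckck(L)\bigr)^*$ and to show that $ckck(L)$ is closed under concatenation; then its Kleene closure can add only $\epsilon$. Write $M = ck(L) = \overline{L^*}$ and $N = kck(L) = M^*$, so that $ckck(L) = \overline{N}$ and $kckck(L) = \overline{N}^*$. The inclusion $ckck(L)\cup\{\epsilon\} \subseteq kckck(L)$ is automatic. For the reverse inclusion it suffices to prove $\overline{N}\cdot\overline{N} \subseteq \overline{N}$, after which induction yields $\overline{N}^* \subseteq \overline{N}\cup\{\epsilon\}$.

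The crucial observation is: for every non-empty word $w$, if $w\notin N$ then $w\in L^*$. Indeed, $w\notin L^*$ would place $w$ in $M$, and since $w$ is non-empty this already gives $w\in M^* = N$.

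To establish closure under concatenation, take $u,v\in\overline{N}$ (both non-empty, as $\epsilon\in N$) and suppose for contradiction that $uv\in N$. Factor $uv = m_1\cdots m_r$ with each $m_i\in M$ and $r\geq 1$, and consider the position in the factorization where $u$ ends. If it coincides with a boundary $|m_1\cdots m_i|$, then $u$ or $v$ is already a product of $m_i$'s and hence in $N$, contradicting the choice of $u,v$. Otherwise the boundary lies strictly inside some $m_i$, yielding $m_i = m_i' m_i''$ with both $m_i', m_i''$ non-empty. Apply the crucial observation to $m_i'$ and $m_i''$: if neither lay in $N$, both would lie in $L^*$, whence $m_i = m_i' m_i''\in L^*\cdot L^* \subseteq L^*$, contradicting $m_i\in M = \overline{L^*}$. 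Thus one of $m_i', m_i''$ belongs to $N$, placing either $u = m_1\cdots m_{i-1} m_i'$ or $v = m_i'' m_{i+1}\cdots m_r$ in $N\cdot N = N$, the final contradiction.

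With $\overline{N}\cdot\overline{N} \subseteq \overline{N}$ in hand, iterating gives $\overline{N}^k \subseteq \overline{N}$ for every $k\geq 1$, so $\overline{N}^* = \{\epsilon\}\cup\overline{N} = \{\epsilon\}\cup ckck(L)$, as required. I expect the main obstacle to be the case where the split between $u$ and $v$ falls in the interior of some factor $m_i$; it is precisely here that the submonoid property $L^*\cdot L^* \subseteq L^*$ is needed, and where the crucial observation earns its keep.
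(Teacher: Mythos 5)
Your proof is correct, and it takes a genuinely different route from the paper. The paper imports the identity $ecece \equiv cece$ from Brzozowski--Grant--Shallit (Lemma~\ref{lem:bgs}) and then does bookkeeping with $\epsilon$, splitting into the cases $\epsilon \in L$ and $\epsilon \notin L$ to translate between $e$ and $k$; the whole content is delegated to the cited lemma. You instead prove the statement from scratch by establishing the structural fact that $ckck(L)$ is closed under concatenation, so that applying $k$ to it can only adjoin $\epsilon$. Your ``crucial observation'' is just the inclusion $\overline{kck(L)} \subseteq \overline{ck(L)} = L^*$ (since $M \subseteq M^*$), and the case analysis on where the $u$--$v$ boundary falls inside a factorization $uv = m_1 \cdots m_r$ with $m_i \in \overline{L^*}$ is sound: a boundary at a factor edge puts $u$ or $v$ in $N = M^*$, while an interior split of some $m_i$ forces one of the two pieces into $N$ (else both lie in $L^*$ and so does $m_i$, contradicting $m_i \in \overline{L^*}$), and then absorbing the remaining full factors again lands $u$ or $v$ in $N$. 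The edge cases are harmless since $\epsilon \in N$ makes $u,v$ non-empty and each $m_i$ is non-empty. What each approach buys: the paper's derivation is shorter given the external lemma and fits its general strategy of recycling Kuratowski-type identities; yours is self-contained, avoids the $e$/$k$ case split entirely, and yields the slightly stronger fact that $ckck(L)$ is a subsemigroup of $\Sigma^*$, which is what really drives the identity.
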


\begin{proof}
First, suppose  $\epsilon\in L$.
Then $e(L)=k(L)$ and $ce(L)=ck(L)$. 
Since $\epsilon\notin ck(L)$, we obtain $ece(L)=eck(L)=kck(L)-\{\epsilon\}$. 
Then, $cece(L)=ckck(L) \cup \{\epsilon\}$. 
So $ecece(L)=kckck(L)$. 
From Lemma~\ref{lem:bgs}, we deduce 
$kckck(L)=ecece(L)=cece(L)=ckck(L) \cup \{\epsilon\}$.

Second, suppose $\epsilon\notin L$.
Then $e(L)=k(L)-\{\epsilon\}$ and $ce(L)=ck(L)\cup \{\epsilon\}$. 
We obtain $ece(L)=kck(L)$ and $cece(L)=ckck(L)$. 
So $ecece(L)=eckck(L)=kckck(L)-\{\epsilon\}$. 
From Lemma~\ref{lem:bgs}, we deduce
$kckck(L)=ecece(L) \cup \{\epsilon\}= cece(L) \cup\{\epsilon\}=ckck(L)\cup \{\epsilon\}$.
\end{proof}

\begin{lemma}
Let $L$ be any language.
\begin{itemize}
\item[(a)] If $xy \in kp(L)$ then $x \in kp(L)$ and $y \in kf(L)$.
\item[(b)] If $xy \in ks(L)$ then $x \in kf(L)$ and $y \in ks(L)$.
\item[(c)] If $xy \in kf(L)$ then $x, y \in kf(L)$.
\item[(d)] If $xy \in kw(L)$, then $x, y \in kw(L)$.
\end{itemize}
\label{kp-lemma}
\end{lemma}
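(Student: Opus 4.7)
The plan is to expand each of $kp(L), ks(L), kf(L), kw(L)$ as the Kleene closure of $p(L), s(L), f(L), w(L)$ respectively (so any nonempty element is a concatenation of prefixes / suffixes / factors / subwords of words in $L$), and then observe that the cut point of the factorization $xy$ lies inside one of the blocks of such a concatenation. The blocks to the left of the cut, together with the left piece of the cut block, will give a decomposition of $x$; the right piece of the cut block together with the blocks to the right will give a decomposition of $y$. Everything then reduces to very short elementary observations about how prefix/suffix/factor/subword interact.

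For part (a), I would write $xy = u_1 u_2 \cdots u_n$ with each $u_i \in p(L)$, find $i$ such that $|u_1 \cdots u_{i-1}| \le |x| \le |u_1 \cdots u_i|$, and split $u_i = u_i' u_i''$ accordingly. Since $u_i'$ is a prefix of $u_i$, and $u_i$ is itself a prefix of some word in $L$, we have $u_i' \in p(L)$, so $x = u_1 \cdots u_{i-1} u_i' \in kp(L)$. On the other side, $u_i''$ is a suffix of a prefix of a word in $L$, hence a factor, so $u_i'' \in f(L)$; each subsequent $u_j$ is in $p(L) \subseteq f(L)$; hence $y = u_i'' u_{i+1} \cdots u_n \in kf(L)$. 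Part (b) is completely symmetric: suffix of a suffix remains a suffix (so $y \in ks(L)$), while prefix of a suffix is a factor (so $x \in kf(L)$).

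For part (c) I would repeat the argument noting that both a prefix and a suffix of a factor of $t \in L$ are themselves factors of $t$, so both the left and right pieces of the cut block stay in $f(L)$, giving $x, y \in kf(L)$. For part (d), the only new ingredient is the (trivial) remark that a prefix or suffix of a subword is again a subword — if $w$ is a subword of $t$ then truncating the scattered embedding of $w$ into $t$ on either end produces a scattered embedding of the truncated word — and the identical argument applies.

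The only real subtlety is book-keeping the empty-word cases: when $x = \epsilon$ or $y = \epsilon$, the conclusions hold because $\epsilon \in kp(L) \cap ks(L) \cap kf(L) \cap kw(L)$, and when $xy = \epsilon$ both pieces are empty. Once that trivial case is dispensed with, the rest is the uniform block-splitting argument described above, so I do not expect any real obstacle; the main decision is simply to treat (a) in detail and remark that (b), (c), (d) follow by exactly the same template.
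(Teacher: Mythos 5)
Your proof is correct, but it takes a more hands-on route than the paper does. The paper's proof of this lemma is a two-line deduction from machinery already in place: for (b) it simply notes that $xy \in ks(L)$ forces $x \in pks(L)$ and $y \in sks(L)$, then invokes $s \subseteq f$ together with the previously established identities $pkf \equiv kf$ (identity~(\ref{pkf})) and $sks \equiv ks$ (Lemma~\ref{kpk}) to conclude $x \in kf(L)$ and $y \in ks(L)$. You instead re-derive the combinatorial content directly: decompose $xy$ into blocks from $p(L)$, $s(L)$, $f(L)$ or $w(L)$, locate the block containing the cut, and use the elementary facts that a prefix of a prefix is a prefix, a suffix of a prefix (or prefix of a suffix) is a factor, and a prefix or suffix of a factor or subword is again a factor or subword. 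This block-splitting is essentially the same argument the paper used earlier to prove Lemma~\ref{kpk} and identity~(\ref{pkf}), so your proof in effect inlines those proofs rather than citing them. What you gain is a self-contained, elementary argument that handles all four parts uniformly (and you correctly flag the empty-word and block-boundary cases); what the paper's route buys is brevity and reuse of its Kuratowski-style identity toolkit, which is the organizing theme of the whole development. Both are sound.
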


\begin{proof}
We prove only (b), with the others being proved similarly.
If $xy \in ks(L)$, then $x \in pks(L)$ and $y \in sks(L)$.
But $s \subseteq f$, so $pks \subseteq pkf$, and $pkf = kf$
by (\ref{pkf}).  Hence $x \in kf(L)$.  Similarly,
$sks \equiv ks$ by Lemma~\ref{kpk}, so $ y\in ks(L)$.  \ \qed
\end{proof}

\begin{lemma}\label{lem:inclusion}
We have $pcpckp\subseteq kp$.
\end{lemma}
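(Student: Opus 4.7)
The plan is to chase the definition of $pcpckp(L)$ one layer at a time, collapse the outer block by an easy complement argument, and then invoke Lemma~\ref{kp-lemma}(a).

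Take an arbitrary $w \in pcpckp(L)$. The outermost $p$ produces some word $v$ with $wv \in cpckp(L)$, that is, $wv \notin pckp(L)$. At this point the key observation is that the outer $pc$ can essentially be discarded: because $p$ is expanding, $ckp(L) \subseteq pckp(L)$, and taking complements yields $\overline{pckp(L)} \subseteq \overline{ckp(L)} = kp(L)$. Hence $wv \in kp(L)$. Finally, Lemma~\ref{kp-lemma}(a) applied to the factorization $wv = w\cdot v$ forces $w \in kp(L)$, which is exactly what the lemma requires.

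The argument is short, so I do not anticipate any real obstacle. The one conceptual point to notice is the collapse of the outer $pc$ via the expanding property of $p$ together with $cc \equiv \epsilon$; once that reduction is in hand, the prefix-closure property of $kp(L)$ from Lemma~\ref{kp-lemma}(a) closes the argument immediately.
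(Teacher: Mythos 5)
Your proof is correct and follows essentially the same route as the paper: unwind the outermost $p$, observe that a word outside $pckp(L)$ must lie in $kp(L)$, and finish by prefix-closure of $kp(L)$. The only cosmetic differences are that you obtain $\overline{pckp(L)} \subseteq kp(L)$ via the expanding property and complementation (the paper unfolds it element-wise) and you cite Lemma~\ref{kp-lemma}(a) where the paper uses $pkp \equiv kp$ — these are the same facts in different clothing.
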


\begin{proof}
Let $x\in pcpckp(L)$. Then there exists $y$ such that $xy\in cpckp(L)$. So $xy\notin pckp(L)$. 
Then, for all $z$, we have $xyz\notin ckp(L)$. Hence  $xyz\in kp(L)$. Thus $x\in pkp(L)=kp(L)$.
\end{proof}

\begin{theorem}
Let $b \in \lbrace p, s, f, w \rbrace$.  Then
\begin{enumerate}[start=26]
\item $kcb(L) = cb(L)\cup \lbrace \epsilon \rbrace$
\label{kcp} 
\item $kckb(L) = ckb(L) \cup \lbrace \epsilon \rbrace$ \label{kckp} 
\item  $kbcbckb(L)=bcbckb(L)\cup\{\epsilon\}$.
\end{enumerate}
\end{theorem}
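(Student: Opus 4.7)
The plan is to prove that $Y := bcbckb(L)$ is closed under concatenation, i.e.\ $YY \subseteq Y$. Given this, the chain $kY = Y^* = \{\epsilon\} \cup Y^+ \subseteq \{\epsilon\} \cup Y$ together with the trivial $Y \cup \{\epsilon\} \subseteq kY$ yields the claimed equality.

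I would begin by unfolding the nested operators to get an explicit membership condition for $Y$, exactly as in the proof of Lemma~\ref{lem:inclusion}. For $b = p$ this gives $x \in Y$ iff there exists $y$ with $xy\Sigma^* \subseteq kp(L)$; the other cases are analogous: $\exists y : \Sigma^* yx \subseteq ks(L)$ for $b = s$, $\exists (u, v) : \Sigma^* u x v \Sigma^* \subseteq kf(L)$ for $b = f$, and $\exists y$ such that $x$ is a subword of $y$ and every string of which $y$ is a subword lies in $kw(L)$ for $b = w$. The proof then rests on two properties of $X := kb(L)$: it is closed under concatenation (being a Kleene closure), and it is closed downward with respect to the relation corresponding to $b$ (prefix, suffix, factor, or subword), by Lemma~\ref{kp-lemma}. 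Given $x_1, x_2 \in Y$ with witnesses $y_1, y_2$, the witness for $x_1 x_2$ can be taken to be $y_2$ for $b = p$ (since downward closure gives $x_1 \in X$, and concatenation closure of $X$ then yields $x_1 \cdot x_2 y_2 z \in X$ for every $z$), $y_1$ for $b = s$ (symmetric), or $y_1 y_2$ for $b = w$ (any string of which $y_1 y_2$ is a subword also has $y_1$ as a subword, hence lies in $kw(L)$).

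The most delicate case is $b = f$, where the witnesses are pairs $(u_i, v_i)$ and I would show that $(u_1, v_2)$ is a witness for $x_1 x_2$. For arbitrary $\alpha, \beta \in \Sigma^*$ the hypotheses give $\alpha u_1 x_1 v_1 \in kf(L)$ and $u_2 x_2 v_2 \beta \in kf(L)$; factor-closure of $kf(L)$ then extracts the factors $\alpha u_1 x_1$ and $x_2 v_2 \beta$, and concatenation closure produces $\alpha u_1 x_1 \cdot x_2 v_2 \beta \in kf(L)$, so that $\Sigma^* u_1 x_1 x_2 v_2 \Sigma^* \subseteq kf(L)$. This extra step of discarding the middle portions $v_1$ and $u_2$ via factor-closure is the one point where the argument requires more than the direct manipulations of the other three cases; everything else is routine bookkeeping on top of the properties already established for $kb(L)$.
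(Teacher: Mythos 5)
Your argument for the third identity, $kbcbckb(L)=bcbckb(L)\cup\{\epsilon\}$, is correct for all four choices of $b$, and it is essentially the paper's own argument in a slightly different packaging: the paper also transfers a witness across the product, using $bcbckb\subseteq kb$ (Lemma~\ref{lem:inclusion}) for the earlier blocks together with concatenation-closure of $kb(L)$, but phrases it as a proof by contradiction for an $n$-fold product, does only $b=p$, and declares the rest ``similar.'' Your formulation ``$Y:=bcbckb(L)$ satisfies $YY\subseteq Y$, hence $kY=Y^*\subseteq Y\cup\{\epsilon\}$'' is a clean equivalent, and you actually spell out the genuinely two-sided case $b=f$ (witness $(u_1,v_2)$, discarding $v_1$ and $u_2$ by factor-closure of $kf(L)$, Lemma~\ref{kp-lemma}(c)) and the scattered case $b=w$, which the paper leaves implicit.

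The gap is one of coverage: the theorem asserts three families of identities, and your proposal addresses only the last. Items \eqref{kcp} and \eqref{kckp}, i.e.\ $kcb(L)=cb(L)\cup\{\epsilon\}$ and $kckb(L)=ckb(L)\cup\{\epsilon\}$, are never mentioned. They do fall to exactly your reduction: it suffices to check that $cb(L)$ and $ckb(L)$ are closed under concatenation, which holds because if $x_1x_2$ were in $b(L)$ (resp.\ $kb(L)$), then closure of $b(L)$ under the corresponding sub-relation (resp.\ Lemma~\ref{kp-lemma}) would put $x_1$ or $x_2$ in $b(L)$ (resp.\ $kb(L)$), contradicting $x_1,x_2\in cb(L)$ (resp.\ $ckb(L)$); this one-line argument is in substance the paper's proof of those two items. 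As written, though, your proposal does not prove the full statement, so these two (easier) cases must be added explicitly.
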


\begin{proof}
We prove only three of these identities; the others can be proved similarly.

$kcp(L) = cp(L)\cup \lbrace \epsilon \rbrace$:
Assume $x \in kcp(L)$.  Either $x = \epsilon$ or we can
write $x = x_1 x_2 \cdots x_n$ for some $n \geq 1$, where each
$x_i \in cp(L)$.  Then each $x_i \not\in p(L)$.  In particular
$x_1 \not\in p(L)$.  Then $x_1 x_2 \cdots x_n \not\in p(L)$, because
if it were, then $x_1 \in p(L)$, a contradiction.  Hence
$x \in cp(L)$.  

$kckp(L) = ckp(L) \cup \lbrace \epsilon \rbrace$:
Assume $x \in kckp(L)$.  Either $x = \epsilon$ or we can~write $x=$ $x_1 x_2 \cdots x_n$ 
for some $n \geq 1$, where each
$x_i \in ckp(L)$.  Then each $x_i \not\in kp(L)$.~In particular
$x_1 \not\in kp(L)$.  Hence $x_1 (x_2 \cdots x_n) \not\in kp(L)$,
because if it~were,~then $x_1 \in kp(L)$ by Lemma~\ref{kp-lemma}, a
contradiction.  Hence $x \not\in kp(L)$, so $x \in ckp(L)$,~as~desired. 

$kpcpckp(L)=pcpckp(L)\cup\{\epsilon\}$: 
Assume $x\in kpcpckp(L)$. 
Either $x = \epsilon$ or 
we can write $x=x_1\cdots x_n$, where each $x_i\in pcpckp(L)$. 
In particular, there exists $y$ such that $x_ny \in cpckp(L)$;
that is, $x_ny\not \in pckp(L)$. 
Assume $x\not\in pcpckp(L)$. Then $xy\not\in cpckp(L)$, so $xy\in pckp(L)$.
Then there exists $z$ such that $xyz\in ckp(L)$; that is,  $xyz\not\in kp(L)$. 
But from Lemma~\ref{lem:inclusion}, we know that every $x_i$ is in  $kp(L)$. 
Further, since $x_ny\not \in pckp(L)$, we have $x_nyz \not \in ckp(L)$;
that is, $x_nyz \in kp(L)$. 
This shows that $xyz=x_1\cdots x_{n-1}(x_nyz)$ belongs to $kp(L)$,
a contradiction. \ \qed
\end{proof}

\begin{theorem}
We have
\begin{enumerate}[start=29]
\item $sckp(L) %
= \Sigma^*$ or $\emptyset$. \label{sckp}
\item The same result holds for 
$fckp, %
pcks, %
fcks, %
pckf, %
sckf, %
fckf, %
wckp, %
wcks, %
wckf, $\\ %
$wckw, %
pckw, %
sckw, %
fckw %
$. \label{fckw}
\end{enumerate}
\end{theorem}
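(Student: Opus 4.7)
The plan is to follow the template of Theorem~\ref{pcs}, using Lemma~\ref{kp-lemma} together with a subword-closure property of $kw(L)$ in order to absorb the inner $k$. Every expression in the statement has the shape $b_1 c k b_2(L)$ with $b_1,b_2 \in \{p,s,f,w\}$. First I would split on whether $kb_2(L)=\Sigma^*$: if so, then $ckb_2(L)=\emptyset$ and the whole composite is $\emptyset$.

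Otherwise, pick any $v \in \Sigma^* \setminus kb_2(L)$. The core step is to exhibit a large set $S_v \subseteq ckb_2(L)$ whose $b_1$-image is all of $\Sigma^*$. For $b_2=p$, Lemma~\ref{kp-lemma}(a) gives $v\Sigma^* \subseteq ckp(L)$, because $vw \in kp(L)$ would force $v \in kp(L)$; for $b_2=s$, part (b) gives $\Sigma^* v \subseteq cks(L)$; for $b_2=f$, part (c) gives $\Sigma^* v \Sigma^* \subseteq ckf(L)$. For $b_2=w$ I would take $S_v$ to be the set of all words containing $v$ as a subword.

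The $b_2=w$ case is the only one not directly covered by Lemma~\ref{kp-lemma}, and this is the main point to verify. I would prove that $kw(L)$ is closed under taking subwords: if $u=u_1\cdots u_n$ with each $u_i \in w(L)$ and $v$ is a subword of $u$, then grouping the chosen positions by the factor in which they sit writes $v = v_1\cdots v_n$ with $v_i$ a subword of $u_i$; since $w(L)$ itself is subword-closed (a subword of a subword of some $y \in L$ is still a subword of $y$), each $v_i \in w(L)$, so $v \in kw(L)$. The contrapositive says that any word with $v$ as a subword lies in $ckw(L)$ whenever $v$ does.

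Finally, for each of the fourteen pairs listed, I would check that $b_1(S_v)=\Sigma^*$: given an arbitrary $u \in \Sigma^*$, the word $vu$ lies in $S_v$ and has $u$ as a suffix, factor, and subword, while $uv$ lies in $S_v$ and has $u$ as a prefix, factor, and subword; this covers every combination of outer $b_1$ and inner $b_2$ that the theorem asks for. Thus $b_1 c k b_2(L)=\Sigma^*$ in this branch, finishing the proof. The expected obstacle is establishing the subword-closure of $kw(L)$; the rest is a routine case analysis parallel to Theorem~\ref{pcs}.
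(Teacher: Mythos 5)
Your proposal is correct and follows essentially the same route as the paper: split on whether $kb_2(L)=\Sigma^*$, and otherwise use Lemma~\ref{kp-lemma} to show the complement contains a set of the form $v\Sigma^*$, $\Sigma^*v$, or $\Sigma^*v\Sigma^*$ whose image under the outer operation is $\Sigma^*$. The only deviation is your separate subword-closure argument for $kw(L)$, which is correct but not needed: Lemma~\ref{kp-lemma}(d) already gives factor-closure, so $\Sigma^*v\Sigma^*\subseteq ckw(L)$ suffices for all the listed outer operations.
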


\begin{proof}
To prove \eqref{sckp}, note that
either $kp(L) = \Sigma^*$, or $kp(L)$ omits some word $v$.  In the
former case, $ckp(L) = \emptyset$, and so $sckp (L) = \emptyset$.  In the
latter case, we have $kp(L)$ omits $v$, so $kp(L)$ must also omit
$v \Sigma^*$ (for otherwise, if $vx \in kp(L)$ for some $x$, then
$v \in kp(L)$ by Lemma~\ref{kp-lemma}, a contradiction).  Then $v \Sigma^*
\in ckp(L)$ and hence $sckp(L) = \Sigma^*$.

The other results can be proved similarly. \ \qed
\end{proof}

\begin{lemma}
Let $L$ be any language.
\begin{itemize}
\item[(a)] If $xy \in skp(L)$, then $x, y \in skp(L)$.
\item[(b)] If $xy \in pks(L)$, then $x, y \in pks(L)$.
\end{itemize}
\label{skp-lemma}
\end{lemma}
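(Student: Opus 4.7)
The plan is to unwind the definitions of $kp(L) = (\pref(L))^*$ and $ks(L) = (\suff(L))^*$ directly, using the elementary fact that a prefix of a prefix is again a prefix (and dually for suffixes). Both parts reduce to finding the ``cut point'' of the factorization and observing that the truncated factor still belongs to $\pref(L)$ or $\suff(L)$.

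For part (a), I would start by writing $xy \in skp(L)$ as the existence of some word $u$ with $uxy \in kp(L)$. By definition of $kp$, I can then factor $uxy = p_1 p_2 \cdots p_n$ with each $p_i \in \pref(L)$. I would then locate the position where the suffix $y$ begins: it falls inside some block $p_j$, so I can split $p_j = p_j' p_j''$ with $ux = p_1 \cdots p_{j-1} p_j'$ and $y = p_j'' p_{j+1} \cdots p_n$. The key observation is that $p_j'$, being a prefix of $p_j \in \pref(L)$, is itself in $\pref(L)$; consequently $ux = p_1 \cdots p_{j-1} p_j' \in kp(L)$. Then $x$ is a suffix of $ux \in kp(L)$, so $x \in skp(L)$, while $y$ is a suffix of $uxy \in kp(L)$, so $y \in skp(L)$.

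For part (b), the argument is dual. Starting from $xy \in pks(L)$, I get $v$ with $xyv \in ks(L)$, and a factorization $xyv = s_1 s_2 \cdots s_n$ with $s_i \in \suff(L)$. Cutting at the point where $x$ ends, I write $s_j = s_j' s_j''$ so that $x = s_1 \cdots s_{j-1} s_j'$ and $yv = s_j'' s_{j+1} \cdots s_n$. Now $s_j''$ is a suffix of $s_j \in \suff(L)$, hence in $\suff(L)$, so $yv \in ks(L)$. Therefore $y$ is a prefix of $yv \in ks(L)$, giving $y \in pks(L)$, and $x$ is a prefix of $xyv \in ks(L)$, giving $x \in pks(L)$.

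There is no real obstacle here; the only mild point to be careful about is the boundary cases where $j = 1$, $j = n$, $p_j' = \epsilon$, or $p_j'' = \epsilon$, which are all handled uniformly because $\epsilon$ lies in both $\pref(L)$ and $\suff(L)$ (assuming these sets are non-empty; if $L = \emptyset$ the statement is vacuous).
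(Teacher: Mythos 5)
Your proof is correct, but it takes a genuinely different route from the paper's. The paper dispatches the lemma in two lines of algebra: from $xy \in skp(L)$ it notes $x \in pskp(L)$ and $y \in sskp(L)$, then simplifies $pskp \equiv (ps)kp \equiv fkp \equiv skp$ using the identities $ps \equiv f$ and $fkp \equiv skp$ (identity~\eqref{fkp}), together with idempotence of $s$; part (b) is the mirror image via $fks \equiv pks$ (identity~\eqref{fks}). You instead argue from first principles: unwinding $kp(L)=(\pref(L))^*$ and $ks(L)=(\suff(L))^*$, locating the cut point of the factorization inside some block $p_j$ (resp.\ $s_j$), and using the fact that a prefix of a prefix (resp.\ a suffix of a suffix) stays in $\pref(L)$ (resp.\ $\suff(L)$). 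This cut-point argument is essentially the same combinatorial content as the paper's proof of identity~\eqref{fks}, so in effect you have inlined the proof of the identity the paper cites; what your version buys is self-containedness (no reliance on the earlier catalogue of identities), while the paper's version buys brevity and reuse of machinery already established. One cosmetic quibble: the $L=\emptyset$ case is not vacuous --- there $skp(L)=pks(L)=\{\epsilon\}$, so the hypothesis can hold with $x=y=\epsilon$ --- but the conclusion is then trivially true, so nothing is lost; likewise your boundary cases $p_j'=\epsilon$ or $p_j''=\epsilon$ could be handled even more simply by just dropping the empty block rather than invoking $\epsilon\in\pref(L)$.
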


\begin{proof}
We prove only (a), with (b) being proved similarly.

If $xy \in skp(L)$, then $x \in pskp(L)$ and $y \in sskp(L)$.
But $pskp \equiv (ps)kp \equiv fkp \equiv skp$ by (\ref{fkp}).
So $x \in skp(L)$.  Also, $sskp = skp$, so $y \in skp(L)$. \ \qed
\end{proof}

\begin{theorem}
We have
\begin{enumerate}[start=31]
\item  $scskp(L) %
= \Sigma^*$ or $\emptyset$.
\item The same result holds for $pcpks$.
\label{last1}
\end{enumerate}
\end{theorem}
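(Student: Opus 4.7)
The plan is to mimic the proof of the $sckp$ identity from the previous theorem, substituting Lemma~\ref{skp-lemma} for Lemma~\ref{kp-lemma}. The key insight is that Lemma~\ref{skp-lemma}(a) tells us $skp(L)$ is closed under taking prefixes (in the sense that if $xy \in skp(L)$ then $x \in skp(L)$), and symmetrically Lemma~\ref{skp-lemma}(b) says $pks(L)$ is closed under taking suffixes. This closure property is exactly what makes the ``complement of $\Sigma^*$-extensions'' argument work.

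First I would handle $scskp(L)$. I would split into two cases based on whether $skp(L) = \Sigma^*$. If it does, then $cskp(L) = \emptyset$ and therefore $scskp(L) = \emptyset$. Otherwise, $skp(L)$ omits some word $v$. I would then argue that $skp(L)$ must also omit the whole set $v\Sigma^*$: for if some $vx$ were in $skp(L)$, then Lemma~\ref{skp-lemma}(a) applied to the factorization $vx$ would force $v \in skp(L)$, contradicting the choice of $v$. Hence $v\Sigma^* \subseteq cskp(L)$, and since every word in $\Sigma^*$ is a suffix of some element of $v\Sigma^*$, we get $s(cskp(L)) \supseteq s(v\Sigma^*) = \Sigma^*$, so $scskp(L) = \Sigma^*$.

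For $pcpks(L)$ the argument is completely symmetric. Either $pks(L) = \Sigma^*$, making $pcpks(L) = \emptyset$, or $pks(L)$ omits some $v$. Using Lemma~\ref{skp-lemma}(b) this time, if $xv \in pks(L)$ for some $x$, then $v \in pks(L)$, a contradiction. Thus $\Sigma^* v \subseteq cpks(L)$, and since every word in $\Sigma^*$ is a prefix of some element of $\Sigma^* v$, we obtain $pcpks(L) = \Sigma^*$.

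There is no real obstacle here; the only point that requires any care is verifying the correct ``direction'' of the closure (prefix-closure for $skp$, suffix-closure for $pks$) so that the right $\Sigma^*$-extension ($v\Sigma^*$ vs.\ $\Sigma^*v$) is the one that lies entirely in the complement, and so that the outer operation ($s$ or $p$) then sweeps out all of $\Sigma^*$.
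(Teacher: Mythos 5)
Your proposal is correct and follows essentially the same argument as the paper: case split on whether $skp(L)=\Sigma^*$, use Lemma~\ref{skp-lemma} to show the omitted word $v$ forces $v\Sigma^*\subseteq cskp(L)$, and apply the outer $s$ (resp.\ $p$ for the symmetric $pcpks$ case) to obtain $\Sigma^*$. The only difference is that you write out the $pcpks$ case explicitly, which the paper leaves as ``analogous.''
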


\begin{proof}
We prove only the first result; the second can be proved analogously.
Either $skp(L) = \Sigma^*$, or it omits some word $v$.  In the 
first case we have $cskp(L) = \emptyset$ and hence
$scskp(L) = \emptyset$.  
In the second case, $skp(L)$ must omit 
$v \Sigma^*$ (for if $vx \in skp(L)$ for any $x$, then by Lemma~\ref{skp-lemma}
we have $v \in skp(L)$, a contradiction).  Hence $scskp(L) = \Sigma^*$. 
\ \qed
\end{proof}

\section{Results}

Our main result is the following:

\begin{theorem}
Let $S = \lbrace  k,e,c,p,f,s,w,r \rbrace$.  Then
for every language $L$, the set ${\cal O}_S (L)$ contains at most 5676 distinct
languages.
\label{main}
\end{theorem}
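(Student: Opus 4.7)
The plan is to establish a finite list of canonical forms for words in $S^*$ modulo the equivalence $\equiv$, using every identity proved in the two preceding sections as a rewrite rule, and then to count the surviving canonical forms.

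First I would use identities (7)--(13) to push every occurrence of $r$ to the leftmost position: $r$ commutes with $k$, $c$, $f$, $w$ and conjugates $p$ with $s$. Because $rr \equiv \epsilon$, every word in $S^*$ becomes equivalent to $r^i u$ with $i \in \{0,1\}$ and $u$ in the monoid generated by $T = \{k,e,c,p,s,f,w\}$. This contributes a factor of $2$ to the final count. Next, I would split $u$ around its occurrences of $c$, writing it as
$$B_0 \, c \, B_1 \, c \, B_2 \, c \, \cdots \, c \, B_m,$$
where each \emph{block} $B_j \in T^*$ is free of $c$. Using $ek \equiv ke \equiv k$ (identity 20), the absorption laws (14)--(17), the idempotence of each closure, Lemma~\ref{kpk}, and the equalities (\ref{fks})--(\ref{pkf}), every block collapses to one element of a short explicit list drawn from $\{\epsilon, p, s, f, w, k, e\}$ together with the two-letter closures $\{kp,ks,kf,kw,ep,es,ef,ew\}$.

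To bound $m$, I would invoke the Kuratowski-style theorem (Theorem~\ref{kura}): for any two closure operations $x,y$, $xcycxcy \equiv xcy$. Combined with the extended identities in items (4)--(6) and Lemma~\ref{co-lemma} (which lists the compound closures available as $x$ or $y$), this caps the alternation length $m$. Within these bounded patterns, further collapse comes from the ``$\emptyset$ or $\Sigma^*$'' identities (23), (24), (29), (30), (31) and (32), which let one terminate many subexpressions trivially, and from the ``$\cup\{\epsilon\}$'' identities (\ref{kckck}), (\ref{kcp}), (\ref{kckp}) and (28), which ensure that a prefix of $k$ on top of certain words only perturbs by $\{\epsilon\}$ and so yields at most one extra language per canonical form rather than a fresh family.

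Finally I would enumerate the surviving canonical patterns case by case, grouped by the leftmost block $B_0$ (which can carry a $k$ or $e$ not available to internal blocks), and tabulate the contributions; multiplying by $2$ for the optional leading $r$ yields $5676$. The main obstacle is the bookkeeping rather than any single deep step: the identities interact, and in particular the ``$\cup\{\epsilon\}$'' corrections must be threaded consistently through the enumeration so that each canonical form is counted exactly once and no simplification is missed. The actual verification is a finite, if tedious, case analysis that one would carry out (or check by computer) to pin down the constant $5676$.
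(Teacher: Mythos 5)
Your plan is, in spirit, the same as the paper's proof: the paper establishes the theorem by a mechanized breadth-first search over $\lbrace k,e,c,p,f,s,w,r\rbrace^*$ that uses exactly these identities as rewrite rules, discards any word containing a factor covered by \eqref{first1}--\eqref{pcf} or \eqref{fckw}--\eqref{last1}, tracks membership of $\epsilon$ to exploit \eqref{kckck}--\eqref{kckp}, and normalizes $r$ to a single occurrence at one end. The difference is that the paper actually carries the enumeration out (5672 irreducible words plus the four trivial languages), whereas in your write-up the decisive step is deferred to ``a finite, if tedious, case analysis''; nothing you prove beforehand guarantees that this case analysis is finite, let alone that it yields 5676. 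In particular, the claim that Theorem~\ref{kura} together with items (4)--(6) ``caps the alternation length $m$'' is unsupported: those identities only collapse words of the special shape $xcycxcy$ with $x,y$ closure operations, and they do not by themselves bound the number of occurrences of $c$ in an irreducible word. The paper's search leaves irreducible words as long as $ckcpcpckpckpckpcpcpckckcr$ (length 25, with a dozen $c$'s), so any a priori cap on $m$ would have to be at least that large and does not follow from the identities you quote; the finiteness and the constant are precisely what the exhaustive search (or an equivalent completed hand enumeration) must deliver.

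There are also two concrete errors in the bookkeeping. First, the $c$-free blocks do not collapse to $\lbrace \epsilon,p,s,f,w,k,e\rbrace\cup\lbrace kp,ks,kf,kw,ep,es,ef,ew\rbrace$: words such as $pk$, $sk$, $fk$, $pks$ and $skp$ are inequivalent to everything in that list (the paper's analysis of $\lbrace k,p,s,f\rbrace$ alone already gives 13 distinct operations, including $pk$, $sk$, $fk$, $pks$, $skp$), so your canonical forms undercount from the start. Second, the final ``multiply by 2 for the optional $r$'' is not valid: identities such as $rkw\equiv kw$, together with the $r$-invariance of $\Sigma^*$, $\emptyset$, $\Sigma^+$ and $\lbrace\epsilon\rbrace$, make some canonical forms coincide with their $r$-versions, which is why the paper's table reports 2842 for $\lbrace k,c,p,s,f,w\rbrace$ but 5676, not $2\cdot 2842$, for the full set. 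A corrected version of your outline is essentially the paper's algorithm; to turn it into a proof you must actually run the rewriting to completion and exhibit the bounded list of irreducible words, which is what the paper's program does.
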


\begin{proof}
Our proof was carried out mechanically.  We used breadth-first search
to examine the set $S^* = \lbrace  k,e,c,p,f,s,w,r \rbrace^*$ by increasing
length of the words; within each length we used lexicographic order
with $k < e < c < p < f < s < w < r$.  The nodes remaining to be
examined are stored in a queue $Q$.

As each new word $x$ representing a series of language
operations is examined, we test it to see
if any factor is of the form given in identities
\eqref{first1}--\eqref{pcf} or \eqref{fckw}--\eqref{last1}.
If it is, then the corresponding
language must be either $\Sigma^*$, $\emptyset$, $\lbrace \epsilon
\rbrace$, or $\Sigma^+$; furthermore, each descendant language will be of this
form.  In this case the word $x$ is discarded.  

Otherwise, we use the remaining identities above to try to reduce $x$
to an equivalent word that we have previously encountered.
If we succeed, then $x$ is discarded.  Otherwise $x(L)$ is
potentially a new language, so we append all the words $Sx$
to the end of the queue.  Some simplifications are possible. For example,
using our identities we can assume $x$ contains only a single $r$ and this
appears at the end; this cuts down on the search space.

We treat the identities \eqref{kckck}--\eqref{kckp} somewhat differently.
We keep track of whether a language contains $\epsilon$ or not.
For example, when appropriate,
we can replace $akcb$ with $acb$ 
for $a,b\in \{ p, s, f, w \}$.  

If the process terminates, then ${\cal O}_S (L)$ is of finite cardinality.

We wrote our program in APL.  For 
$S = \lbrace  k,c,p,f,s,w,r \rbrace$, the process terminated with
5672 nodes that could not be simplified using our identities.  
We did not count $\emptyset, \lbrace \epsilon \rbrace,
\Sigma^+,$ and $\Sigma^*$.  The total is thus 5676.

The longest word examined was $ckcpcpckpckpckpcpcpckckcr$, of length 25,
and the same word with $p$ replaced by $s$.  

Our program generates a complete description of the words and how
they simplify, which can be viewed at
{\tt www.cs.uwaterloo.ca/\char'176shallit/papers.html}. \ \qed
\end{proof}

\begin{remark}
If we use {\it two\/} arbitrary
closure operations $a$ and $b$ with no relation 
between them, then the monoid generated by
$\lbrace a,b \rbrace$ could potentially be
infinite, since any two finite prefixes of $ababab\cdots$ are distinct.

Here is an example.  Let $p$ denote prefix, as above, and define
the exponentiation operation
\begin{equation}
t(L) = \lbrace x^i \ : \ x \in L \text{ and } i \text{ is an integer} \geq 1 \rbrace.
\label{expo}
\end{equation}
Then it is easy to see that $t$ is a closure operation, and hence
the orbits ${\cal O}_{\lbrace p \rbrace} (L)$ 
and ${\cal O}_{\lbrace t \rbrace} (L)$ are finite, for
all $L$.  However, for $L = \lbrace ab \rbrace$,
the orbit ${\cal O}_{\lbrace p,t \rbrace} (L)$ is infinite,
as $a b a^i \in (pt)^i (L)$, but
$a b a^i \not\in (pt)^j (L)$ for all $j < i$.

Thus our proof of Theorem~\ref{main} 
crucially depends on the properties of the operations
$\lbrace k,e,c,p,s,f,w,r \rbrace$.
\end{remark}

We now give some results for some interesting subsets of $S$.

\subsection{Prefix and complement}

In this case at most 14 distinct languages can be generated.  The bound of $14$
can be achieved, e.g., by the regular language over $\Sigma = \lbrace a,b,c,d \rbrace$
given by the regular expression
$ a^*( (b+c)(a (\Sigma\Sigma)^* + b + d \Sigma^*) + d \Sigma^+)$
and accepted by the DFA in Figure~\ref{pc-fig}. 
\begin{figure}[H]
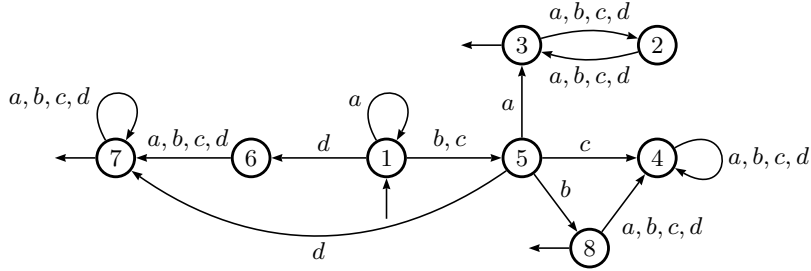

\centering
\VCDraw{%
\begin{VCPicture}{(-6,-2.5)(6,2.5)}
 \State[1]{(0,0)}{1}
 \State[5]{(3,0)}{5}
 \State[4]{(6,0)}{4}
 \State[3]{(3,2.5)}{3}
 \State[2]{(6,2.5)}{2}
 \State[6]{(-3,0)}{6}
 \State[7]{(-6,0)}{7}
\State[8]{(4.5,-2)}{8}
\Initial[s]{1}
\Final[w]{7}
\Final[w]{3}
\Final[w]{8}
 \LoopN{1}{a}
\EdgeR{1}{6}{d}
\EdgeL{1}{5}{b,c}

\ArcL[.5]{3}{2}{a,b,c,d}
\ArcL[.5]{2}{3}{a,b,c,d}
\VArcL[.5]{arcangle=35}{5}{7}{d}
\EdgeL{5}{4}{c}
\EdgeL{5}{3}{a}
\EdgeL{5}{8}{b}
\EdgeR{6}{7}{a,b,c,d}
\EdgeR[.3]{8}{4}{a,b,c,d}
\LoopE[.5]{4}{a,b,c,d}
\LoopN{7}{a,b,c,d}
\end{VCPicture}}
    \caption{DFA accepting a language $L$ with orbit size $14$ under operations $p$ and $c$}
\label{pc-fig}
\end{figure}

Table~\ref{table1} gives the appropriate set of final states under the operations.
\begin{table}[H]
\begin{center}
\begin{tabular}{|c|c||c|c|}
\hline
language & final states & language & final states  \\
\hline
$L$ & 3,7,8 & $pcpc(L)$ & 1,5,6,7 \\
\hline
$c(L)$ & 1,2,4,5,6 & $cpcp(L)$ & 2,3,6,7 \\
\hline
$p(L)$ & 1,2,3,5,6,7,8 & $cpcpc(L)$ & 2,3,4,8 \\
\hline
$pc(L)$ & 1,2,3,4,5,6,8 & $pcpcp(L)$ & 1,2,3,5,6,7 \\
\hline
$cp(L)$ & 4 & $pcpcpc(L)$ & 1,2,3,4,5,8 \\
\hline
$cpc(L)$ & 7 & $cpcpcp(L)$ & 4, 8 \\
\hline
$pcp(L)$ & 1,4,5,8 & $cpcpcpc(L)$ & 6, 7  \\
\hline
\end{tabular}
\end{center}
\caption{Final states for composed operations}
\label{table1}
\end{table}

\subsection{Prefix, Kleene star, complement}

The same process, described above for the operations
$\lbrace k, e, c, p, s, f, w, r \rbrace$, can be carried out for
other subsets, such as $\lbrace k, c, p \rbrace$.  For this our breadth-first
search gives $1066$ languages. The longest word examined
was $ckcpcpckpckpckpcpcpckckc$.

\subsection{Factor, Kleene star, complement}

Similarly, we can examine $\lbrace k, c, f \rbrace$.  
Here breadth-first search gives $78$ languages, so our bound is $78+ 4 = 82$.
We can improve this bound by considering new kinds of arguments. 

\begin{lemma}\label{lem:kcf1} 
Let $L$ be any language. 
There are at most 4 languages distinct from $\Sigma^*, \emptyset,\Sigma^+$, and $\{\epsilon\}$ in 
$\mathcal O_{\{k,f,kc,fc\}}(f(L))$. These languages are among $f(L),$ $kf(L),$ $kckf(L)$, and $kcf(L)$.
\end{lemma}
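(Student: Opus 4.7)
The plan is to model the orbit as a transition system on five states: the four candidate languages $M_1 = f(L)$, $M_2 = kf(L)$, $M_3 = kcf(L)$, $M_4 = kckf(L)$, together with a sink $T$ representing the four trivial languages $\Sigma^*$, $\emptyset$, $\Sigma^+$, $\{\epsilon\}$ (which form a closed subset under any of $k$, $f$, $kc$, $fc$). For each generator in $\{k,f,kc,fc\}$ and each state I would verify that the image lies in $\{M_1,M_2,M_3,M_4\} \cup T$; since the orbit starts at $M_1$, this confines the whole orbit to these states.

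The structural engine is the observation that $M_1$ and $M_2$ are factor-closed: $f$ is idempotent, and $f(kf(L)) = kf(L)$ follows from $fkf \equiv kf$ in (\ref{pkf}) (equivalently, from Lemma~\ref{kp-lemma}(c)). For any nonempty proper factor-closed language $N$, the complement $cN$ is closed under two-sided padding---if $w \in cN$ but $uwv \in N$, factor-closedness of $N$ would force $w \in N$---hence closed under concatenation, and it omits $\epsilon$ since $\epsilon \in N$. Consequently
\[ k(cN) = \{\epsilon\} \cup cN. \]
Applied to $N = f(L)$ and $N = kf(L)$, this yields the explicit representations
\[ M_3 = cf(L) \cup \{\epsilon\}, \qquad M_4 = ckf(L) \cup \{\epsilon\} \]
in the nondegenerate case; the degenerate cases $L \in \{\emptyset,\{\epsilon\},\Sigma^*\}$ make the four candidates collapse to trivial languages and can be dispatched by direct inspection.

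With these representations the sixteen transitions reduce to routine checks. For the $M_1$ and $M_2$ rows I would use idempotence of $f$ and $k$, the identity $fkf \equiv kf$, and the triviality results $fcf(L), fckf(L) \in \{\Sigma^*,\emptyset\}$ from (\ref{pcf}) and (\ref{fckw}). For $M_3$ one has $c(M_3) = f(L) \setminus \{\epsilon\}$, so $kc(M_3) = kf(L) = M_2$ and $fc(M_3) = f(L) = M_1$; moreover $f(M_3) = \Sigma^* \in T$, because every word $s \in \Sigma^*$ occurs as a factor of $sw$ for any $w \in cf(L)$, and $sw \in cf(L) \subseteq M_3$. The $M_4$ row is analogous, giving $kc(M_4) = fc(M_4) = M_2$ and $f(M_4) \in T$. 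The main obstacle I foresee is the bookkeeping of the empty word: the representations $M_i = \cdots \cup \{\epsilon\}$ are what force $\Sigma^+$ and $\{\epsilon\}$ into the list of trivial targets alongside $\Sigma^*$ and $\emptyset$, and some care is needed when stripping or reinserting $\epsilon$ under $c$, $k$, and $f$.
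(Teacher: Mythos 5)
Your proposal is correct and takes essentially the same route as the paper's (sketched) proof: a case analysis confining the orbit under $\{k,f,kc,fc\}$ starting from $f(L)$ to the four candidate languages plus the closed set of trivial languages, relying on the same facts the paper cites, namely $fkf\equiv kf$, $kcf(L)=cf(L)\cup\{\epsilon\}$, $kckf(L)=ckf(L)\cup\{\epsilon\}$, and the triviality of $fcf(L)$ and $fckf(L)$. The only difference is presentational: you re-derive these identities directly from factor-closedness and the padding argument instead of citing the paper's numbered identities, and you organize the case study explicitly as a five-state transition system.
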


\begin{proof}[Sketch]
First observe that the set of languages $\{\Sigma^*, \emptyset,\Sigma^+,\{\epsilon\}\}$
is closed under any operation of the set $\{k,c,f\}$. We make a case study.
We consider successively the languages generated by $\{k,f,kc,fc\}$ from $fcf(L),kf(L)$, and $kcf(L)$. 
We make use of Identities \eqref{skf}, \eqref{pcf}, \eqref{kcp}, \eqref{kckp}, and \eqref{fckw}.
\end{proof}

Let $\alphabet(L)$ denote the minimal alphabet of a language $L$, that is, the minimal set of letters that occur in words of $L$.

\begin{lemma}\label{lem:kf} 
Let $L$ be any language. We have $kf(L)=k(\alphabet(L))$.
\end{lemma}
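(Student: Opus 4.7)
The plan is to establish the identity by two easy inclusions, leveraging the fact that $k$ is just Kleene star and that factors preserve the alphabet.

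For the inclusion $kf(L)\subseteq k(\alphabet(L))$, I would first observe that every factor of a word in $L$ uses only letters appearing in $L$, i.e.\ $f(L)\subseteq \alphabet(L)^*$. Since $k(\alphabet(L))=\alphabet(L)^*$ and this language is already closed under Kleene star, applying $k$ (which is inclusion-preserving and idempotent, as established in the paper) yields $kf(L)\subseteq k(\alphabet(L)^*)=\alphabet(L)^*=k(\alphabet(L))$.

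For the reverse inclusion $k(\alphabet(L))\subseteq kf(L)$, I would use that each letter $a\in\alphabet(L)$ occurs in some word of $L$, so $a$ is a length-one factor of that word, giving $\alphabet(L)\subseteq f(L)$. Applying $k$ and using inclusion-preservation then gives $k(\alphabet(L))\subseteq kf(L)$.

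There is no real obstacle here; the only care required is with the degenerate cases $L=\emptyset$ and $L=\{\epsilon\}$, where $\alphabet(L)=\emptyset$ and both sides collapse to $\{\epsilon\}$ (since $\emptyset^*=\{\epsilon\}$ and $f$ preserves these languages). A single sentence should handle this at the end.
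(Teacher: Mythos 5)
Your proof is correct and is essentially the paper's own argument: both directions rest on $f(L)\subseteq \alphabet(L)^*$ (with idempotence of $k$) and $\alphabet(L)\subseteq f(L)$. The remark about $L=\emptyset$ and $L=\{\epsilon\}$ is harmless but unnecessary, since the two inclusions already cover those cases.
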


\begin{proof}
The minimal alphabets of $L$ and $f(L)$ coincide. 
Thus $f(L)\subseteq k(\alphabet(L))$, so $kf(L)\subseteq k(\alphabet(L))$. 
Further, $\alphabet(L)\subseteq f(L)$. 
So $k(\alphabet(L))\subseteq kf(L)$~as~well.
\end{proof}

\begin{lemma}\label{lem:kcf2} 
Let $L$ be any language.
There are at most 2 languages distinct from $\Sigma^*, \emptyset,\Sigma^+$, and $\{\epsilon\}$ in 
$\mathcal O_{\{k,f,kc,fc\}}(fk(L))-\mathcal O_{\{k,f,kc,fc\}}(f(L))$. These languages are among $fk(L)$ and $kcfk(L)$.
\end{lemma}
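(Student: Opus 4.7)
The plan is to enumerate the orbit of $fk(L)$ under $\{k,f,kc,fc\}$ by a short case analysis, in the spirit of Lemma~\ref{lem:kcf1}, and to show that only $fk(L)$ and $kcfk(L)$ can arise as non-trivial languages not already lying in $\mathcal{O}_{\{k,f,kc,fc\}}(f(L))$. I would first apply each of the four generators to $fk(L)$. Idempotence of $f$ gives $ffk(L)=fk(L)$. Lemma~\ref{kpk} (with $a=k$, $b=f$) yields $kfk\equiv kf$, so $k(fk(L))=kf(L)$, which belongs to $\mathcal{O}_{\{k,f,kc,fc\}}(f(L))$ by Lemma~\ref{lem:kcf1}. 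Identity~(\ref{pcf}) gives $fcf(M)\in\{\Sigma^*,\emptyset\}$ for every $M$, so specializing to $M=k(L)$ yields $fcfk(L)\in\{\Sigma^*,\emptyset\}$. The only genuinely new one-step descendant is $kcfk(L)$ itself.

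Next I would simplify $kcfk(L)$ using identity~(\ref{kcp}), which (with $b=f$ and $L$ replaced by $k(L)$) yields $kcfk(L)=cfk(L)\cup\{\epsilon\}$. Since $\epsilon\in k(L)\subseteq fk(L)$, we have $\epsilon\notin cfk(L)$, and hence $c(kcfk(L))=\Sigma^*-(cfk(L)\cup\{\epsilon\})=fk(L)-\{\epsilon\}$. This is the key technical identity driving the second round.

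Now I apply each generator to $kcfk(L)$: idempotence of $k$ gives $k(kcfk(L))=kcfk(L)$; distributing $f$ over union yields $f(kcfk(L))=fcfk(L)\cup\{\epsilon\}\in\{\Sigma^*,\{\epsilon\}\}$, which are trivial; for $kc$, the identity above gives $kc(kcfk(L))=k(fk(L)-\{\epsilon\})=k(fk(L))=kfk(L)=kf(L)$, because Kleene star is insensitive to the presence of $\epsilon$ in the base set, and then by Lemma~\ref{kpk}; for $fc$, similarly $fc(kcfk(L))=f(fk(L)-\{\epsilon\})$, which equals $fk(L)$ whenever $fk(L)$ contains a non-empty word (the only non-trivial case), and is $\emptyset$ otherwise.

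Putting the two rounds together, every language in $\mathcal{O}_{\{k,f,kc,fc\}}(fk(L))$ is one of $fk(L)$, $kf(L)$, $kcfk(L)$, or a trivial language among $\{\Sigma^*,\emptyset,\Sigma^+,\{\epsilon\}\}$. Removing $kf(L)\in\mathcal{O}_{\{k,f,kc,fc\}}(f(L))$ and the trivial languages leaves at most $\{fk(L),kcfk(L)\}$, as the lemma claims. The only place where care is required is the $\epsilon$-bookkeeping in the identity $c(cfk(L)\cup\{\epsilon\})=fk(L)-\{\epsilon\}$ and its interaction with $k$ and $f$; once that is pinned down, the rest is a mechanical use of identities already established.
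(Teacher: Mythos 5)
Your proof is correct, but it takes a genuinely different route from the paper's. The paper's proof is a two-line reduction: apply Lemma~\ref{lem:kcf1} to the language $k(L)$, so that the non-trivial members of $\mathcal{O}_{\{k,f,kc,fc\}}(fk(L))$ are among $fk(L)$, $kfk(L)$, $kckfk(L)$, $kcfk(L)$, and then invoke $kfk\equiv kf$ (which the paper derives from Lemma~\ref{lem:kf} together with $\alphabet(k(L))=\alphabet(L)$, and which you obtain just as legitimately from Lemma~\ref{kpk} with $a=k$, $b=f$) to rewrite $kfk(L)=kf(L)$ and $kckfk(L)=kckf(L)$, both of which already lie in $\mathcal{O}_{\{k,f,kc,fc\}}(f(L))$; only $fk(L)$ and $kcfk(L)$ survive. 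You instead recompute the orbit of $fk(L)$ from scratch by a two-round case analysis, using $ff\equiv f$, $kfk\equiv kf$, identity~\eqref{pcf} for $fcf$, identity~\eqref{kcp} to write $kcfk(L)=cfk(L)\cup\{\epsilon\}$, and explicit $\epsilon$-bookkeeping to get $c(kcfk(L))=fk(L)-\{\epsilon\}$, hence $kc(kcfk(L))=kf(L)$ and $fc(kcfk(L))\in\{fk(L),\emptyset\}$. What your version buys is self-containedness: the candidate list is established outright rather than inherited from Lemma~\ref{lem:kcf1}, which the paper only proves by a sketch. What the paper's version buys is brevity: the whole case analysis is absorbed into one earlier lemma applied to $k(L)$.

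One sentence of your summary over-claims: descendants of $kf(L)$ (for instance $kckf(L)$) and of the trivial languages also belong to $\mathcal{O}_{\{k,f,kc,fc\}}(fk(L))$, so it is not literally true that every member of that orbit is one of $fk(L)$, $kf(L)$, $kcfk(L)$ or a trivial language. The statement your analysis actually supports, and which suffices for the lemma, is that every member is $fk(L)$, $kcfk(L)$, trivial, or an element of $\mathcal{O}_{\{k,f,kc,fc\}}(f(L))$; this uses the (easy, but worth saying) facts that $\mathcal{O}_{\{k,f,kc,fc\}}(f(L))$ is closed under the four generators and that $\{\Sigma^*,\emptyset,\Sigma^+,\{\epsilon\}\}$ is closed under $k$, $f$, and $c$. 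With that phrasing fixed, your conclusion stands.
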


\begin{proof}
Apply Lemma~\ref{lem:kcf1} to $k(L)$ and use $kfk\equiv kf$. To see the latter identity, use Lemma~\ref{lem:kf} 
and observe that $\alphabet(k(L))=\alphabet(L)$.
\end{proof}

\begin{lemma}  \label{f-fc}
For any language $L$, we have either $f(L)=\Sigma^*$ or $fc(L)=\Sigma^*$.
\end{lemma}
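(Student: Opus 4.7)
The plan is to argue by contradiction, using the observation that forbidding a word from appearing as a factor of any member of $L$ forces every occurrence of that word to land in $\overline{L}$, and vice versa; concatenating two such forbidden words then produces a word that must simultaneously lie in $L$ and in $\overline{L}$.

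First I would suppose that $f(L) \neq \Sigma^*$ and $fc(L) \neq \Sigma^*$. From the first assumption I pick a word $u \in \Sigma^* - f(L)$, and from the second a word $v \in \Sigma^* - fc(L)$. The key reformulation I would use is: $u \notin f(L)$ means no word containing $u$ as a factor lies in $L$, i.e.\ every word of $\Sigma^* u \Sigma^*$ lies in $\overline{L}$; symmetrically, $v \notin fc(L) = f(\overline{L})$ means every word of $\Sigma^* v \Sigma^*$ lies in $L$.

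Next I would produce a single word witnessing the contradiction. The obvious candidate is $w = uv$, since it contains both $u$ and $v$ as factors. By the first implication, $w \in \overline{L}$; by the second, $w \in L$. This is the contradiction, so $f(L) = \Sigma^*$ or $fc(L) = \Sigma^*$.

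There is no real obstacle here; the whole argument rests on unpacking the definition of $f$ and the complementary roles of $L$ and $\overline{L}$. The only minor care needed is to remember that $fc$ stands for $f \circ c$, so $fc(L) = f(\overline{L})$, and to pick a word (such as $uv$) guaranteed to have both $u$ and $v$ as factors simultaneously.
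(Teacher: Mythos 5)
Your proof is correct, and it rests on the same key observation as the paper's: a word $u$ absent from $f(L)$ forces all of $\Sigma^* u \Sigma^*$ into $\overline{L}$ (and symmetrically for $fc$). The paper argues directly (from $\Sigma^* u \Sigma^* \subseteq c(L)$ it concludes $fc(L)=\Sigma^*$ immediately), whereas you repackage this as a contradiction via the concatenation $uv$; the difference is only cosmetic.
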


\begin{proof} 
Assume $f(L)\neq\Sigma^*$. 
Then there exists a word in $cf(L)$, say $w$. Hence $\Sigma^*w\Sigma^*\cap f(L)=\emptyset$. 
Since $L\subseteq f(L)$, we also have  $\Sigma^*w\Sigma^*\cap L=\emptyset$, that is $\Sigma^*w\Sigma^*\subseteq c(L)$. This implies $fc(L)= \Sigma^*$. 
\end{proof}

\begin{theorem}
50 is a tight upper bound for the size of the orbit of $\{k,c,f\}$.
\end{theorem}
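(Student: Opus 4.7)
The plan is to combine the brute-force enumeration of Section~5 (which yielded $78$ non-trivial classes for $\{k,c,f\}$, giving the weak bound $82$) with the structural Lemmas~\ref{lem:kcf1}, \ref{lem:kf}, \ref{lem:kcf2}, and~\ref{f-fc} to collapse enough equivalence classes to reach $46$ non-trivial languages, and then to exhibit a language attaining $50$.

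The first key ingredient is Lemma~\ref{f-fc}: for every orbit member $M$, either $f(M) = \Sigma^*$ or $fc(M) = \Sigma^*$. Since $c$ is an involution and $\mathcal O_{\{k,c,f\}}(L) = \mathcal O_{\{k,c,f\}}(c(L))$, I will assume without loss of generality that $f(L) = \Sigma^*$. Under this assumption, $L \subseteq k(L)$ gives $\Sigma^* = f(L) \subseteq fk(L)$, whence $fk(L) = \Sigma^*$ and then $kf(L) = k(\Sigma^*) = \Sigma^*$ as well; complements give the corresponding $\emptyset$'s. Because $k$, $c$, and $f$ each preserve the set of four trivial languages $\{\Sigma^*,\emptyset,\Sigma^+,\{\epsilon\}\}$, every word in the BFS list whose innermost $\{k,f\}$-block (the one acting directly on $L$) reduces to $f$, $kf$, $fk$, $kfk$, or $fkf$ contributes only trivial images and is discarded.

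The remaining BFS words split naturally into two families: those that never insert $f$ adjacent to $L$ (living in the Kuratowski $\{k,c\}$-suborbit of $L$, bounded by $14$), and those in which $f$ or $fk$ appears at some deeper nesting. For the second family I write the relevant sub-orbits as $\mathcal O_{\{k,c,f\}}(f(L))$ and $\mathcal O_{\{k,c,f\}}(fk(L))$, decompose each into a $\{k,f,kc,fc\}$-orbit --- where Lemmas~\ref{lem:kcf1} and~\ref{lem:kcf2} furnish the hard bounds of $4$ and $2$ respectively --- together with its $c$-shifted counterpart obtained by prepending a single $c$. Lemma~\ref{lem:kf} collapses $kf(M)$ to $k(\alphabet(M))$, merging classes differing only in irrelevant middle structure. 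Careful bookkeeping across the two families, cross-referenced with the BFS trace and the identities of Sections~3 and~4, is expected to account for exactly $46$ distinct non-trivial classes, for a total of $50$.

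For tightness I would construct an explicit $L$ over a small alphabet, in the style of the DFA example of Section~5.1, chosen so that every reduction used in the upper-bound argument becomes an inequality in the other direction --- that is, so that $L$, $c(L)$, the Kuratowski iterates and each $f$- and $fk$-shifted variant are pairwise distinct. Verification reduces to building the $50$ candidate DFAs and checking pairwise inequivalence mechanically. The hard part will be the fine bookkeeping in the upper bound: the numbered identities and the new structural lemmas interact in subtle ways, and pinning down the count to exactly $46$ (rather than off by a few) will require checking the BFS output against each lemma systematically. A secondary challenge is identifying the tight example; while one expects a suitable $L$ to exist by analogy with the $\{p,c\}$- and $\{k,c,p\}$-cases, certifying pairwise distinctness of $50$ languages is itself non-trivial and likely best delegated to a computation on DFAs.
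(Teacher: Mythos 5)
Your target arithmetic ($46$ nontrivial classes plus the four languages $\Sigma^*,\emptyset,\Sigma^+,\{\epsilon\}$) coincides with the paper's $32+14+4$, and you correctly single out Lemmas~\ref{lem:kcf1}, \ref{lem:kcf2} and~\ref{f-fc} as the main tools. But the decomposition you build around them has a genuine hole. After your reduction to the case $f(L)=\Sigma^*$ you split the surviving words into (i) those whose innermost block is not an $f$, which you claim live in the $14$-element $\{k,c\}$-suborbit, and (ii) those generating $\mathcal O_{\{k,c,f\}}(f(L))$ and $\mathcal O_{\{k,c,f\}}(fk(L))$. Family (ii) is vacuous under your own normalization: once $f(L)=\Sigma^*$ (hence $fk(L)=kf(L)=\Sigma^*$), every language in those two sub-orbits is one of the four trivial ones. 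And family (i) is \emph{not} contained in the $\{k,c\}$-orbit: words such as $fc(L)$, $fck(L)$, $kcfck(L)$, $fckck(L)$, $kckfckck(L)$ have their innermost $f$ applied to $c(L)$, $ck(L)$, $ckck(L)$, \dots; they are not reducible to $\{k,c\}$-words, and they are precisely the languages the paper must control, via the $16$ operations of \eqref{list16} applied to both $L$ and $c(L)$ together with their complements. Your plan never accounts for them. Relatedly, invoking Lemma~\ref{f-fc} only once, at the level of the pair $(L,c(L))$, is not enough: the halving from $64$ candidates to $32$ requires applying the lemma again at the deeper bases (for instance to the pairs $(kc(L),ckc(L))$ and $(kckc(L),ckckc(L))$), since $f(L)=\Sigma^*$ says nothing about $fck(L)$ or $fckc(L)$.

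Beyond this structural gap, both halves of the theorem are left as intentions rather than arguments: the count of ``exactly $46$'' is only ``expected'' from bookkeeping you do not perform, and no witness for tightness is produced. The paper's sketch, by contrast, pins the nontrivial part of the orbit to the explicit list \eqref{list16} (images of $L$ and $c(L)$ and their complements), halves it with Lemma~\ref{f-fc}, adds the $14$ Kuratowski languages and the four trivial ones, and then certifies tightness with a concrete language: two disjoint-alphabet copies of the language accepted by the DFA of Figure~\ref{fig:aut50}, over a nine-letter alphabet containing a letter that occurs in no word of $L$. As written, your proposal shares the paper's ingredients but closes neither the upper bound nor the lower bound.
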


\begin{proof}[Sketch]
From Lemmas~\ref{lem:kcf1} and \ref{lem:kcf2}, and Identity~\eqref{kckck},
starting with an arbitrary language $L$, 
the languages in $\mathcal O_{\{k,c,f\}}(L)$ that may differ from $\Sigma^*, \emptyset,\Sigma^+$, and $\{\epsilon\}$ 
are among the images of $L$ and $c(L)$ under the 16 operations
\begin{gather}\label{list16}
f,kf,kckf,kcf,fk,kcfk,fck,kfck,kckfck,kcfck,\\
\nonumber fkck,kcfkck,fckck,kfckck,kckfckck,kcfckck.
\end{gather}
the complements of these images, together with the $14$ languages in $\mathcal O_{\{k,c\}}(L)$.

By using Lemma \ref{f-fc}, we show that there are at most 32 pairwise distinct languages among the $64=16\cdot 4$ 
languages given by the images of $L$ and $c(L)$ under the 16 operations \eqref{list16} and their complements.

Adding the 14 languages in $\mathcal O_{\{k,c\}}(L)$, and $\Sigma^*, \emptyset,\Sigma^+$, and $\{\epsilon\}$, we obtain that $50=32+14+4$ is an upper bound 
for the size of the orbit of $\{k,c,f\}$.

The bound is tight because the language $L$ given by two copies (over disjoint alphabets) of the language  accepted by the DFA of Figure~\ref{fig:aut50} over the alphabet $\{a,b,c,d,e,f,g,h,i\}$ (i is a letter that does not occur in any word of $L$, i.e., $i\notin \alphabet(L)$) has 50 pairwise distinct elements in $\mathcal O_{\{k,c,f\}}(L)$. 
\begin{figure}[H]
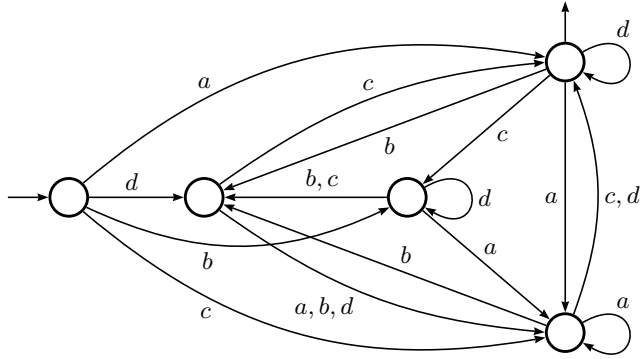

\centering
\VCDraw{%
\begin{VCPicture}{(0,-4)(12,3)}
 \State{(0,0)}{1}
 \State{(3,0)}{5}
 \State{(7.5,0)}{3}
 \State{(11,3)}{2}
 \State{(11,-3)}{4}
\Initial[w]{1}
\Final[n]{2}
\EdgeL{1}{5}{d}
\VArcL[.3]{arcangle=25}{1}{2}{a}
\VArcR[.4]{arcangle=-25}{1}{3}{b}
\VArcR[.3]{arcangle=-25}{1}{4}{c}

\LoopE{2}{d}
\EdgeR[.5]{2}{4}{a}

\EdgeL[.5]{2}{5}{b}
\EdgeL{2}{3}{c}

\LoopE[.5]{3}{d}
\EdgeL{3}{4}{a}
\EdgeR[.4]{3}{5}{b,c}

\LoopE{4}{a}
\VArcR[.5]{arcangle=-20}{4}{2}{c,d}
\EdgeR[.45]{4}{5}{b}

\VArcL[.5]{arcangle=15}{5}{2}{c}
\VArcR[.4]{arcangle=-15}{5}{4}{a,b,d}

\end{VCPicture}}
    \caption{The DFA made of two copies of this DFA accept a language $L$ with orbit size $50$ under operations $k,c,$ and $f$.}
\label{fig:aut50}
\end{figure}
\end{proof}

\subsection{Kleene star, prefix, suffix, factor}

Here there are at most $13$ distinct languages, given by the action of
$$\lbrace \epsilon, k, p, s, f, kp, ks, kf, pk, sk, fk, pks, skp \rbrace.$$
The bound of $13$ is achieved, for example, by $L = \lbrace abc \rbrace$.

\subsection{Summary of results}

Table~\ref{table2} gives our upper bounds on the number of distinct
languages generated by the set of operations.  An entry in {\bf bold}
indicates that the bound is known to be tight.  Some entries, such as
$p,r$, are omitted, since they are the same as others (in this case,
$p,s,f,r$).  Most bounds were obtained directly from our program, and others
by additional reasoning.  An asterisk denotes those bounds for which some
additional reasoning was required to reduce the upper bound found by our
program to the bound shown in Table~\ref{table2}.

\begin{table}[htb]
\begin{center}
\begin{tabular}{|M|N||M|N||M|N|}
\hline
  r & {\bf 2} &             w & {\bf 2} &          f & {\bf 2} \\
\hline
  s & {\bf 2} &             p & {\bf 2} &           c & {\bf 2} \\
\hline
  k & {\bf 2} &             w,r & {\bf 4} &          f,r & {\bf 4} \\
\hline
  f,w & {\bf 3} &            s,w & {\bf 3} &          s,f & {\bf 3} \\
\hline
  p,w & {\bf 3} &            p,f & {\bf 3} &          c,r & {\bf 4} \\
\hline
  c,w & {\bf 6}* &            c,f & {\bf 6}* &         c,s & {\bf 14} \\
\hline
  c,p & {\bf 14} &           k,r & {\bf 4} &          k,w & {\bf 4} \\
\hline
  k,f & {\bf 5} &            k,s & {\bf 5} &          k,p & {\bf 5} \\
\hline
  k,c & {\bf 14} &           f,w,r & {\bf 6} &         s,f,w & {\bf 4} \\
\hline
  p,f,w & {\bf 4} &           p,s,f & {\bf 4} &        c,w,r & {\bf 10}* \\
\hline
  c,f,r & {\bf 10}* &          c,f,w & {\bf 8}* &        c,s,w & {\bf 16}* \\
\hline
  c,s,f & {\bf 16}* &          	c,p,w & {\bf 16}* &        c,p,f & {\bf 16}* \\
\hline
 k,w,r & {\bf 7} &          k,f,r & {\bf 9} &        k,f,w & {\bf 6} \\
\hline
  k,s,w & {\bf 7} &           k,s,f & {\bf 9} &        k,p,w & {\bf 7} \\
\hline
  k,p,f & {\bf 9} &           k,c,r & {\bf 28} &        k,c,w &  {\bf 38}* \\
\hline
  k,c,f & {\bf 50}*  &         k,c,s & 1070 &      k,c,p & 1070 \\
\hline
  p,s,f,r & {\bf 8} &          p,s,f,w & {\bf 5} &        c,f,w,r & {\bf 12}* \\
\hline
  c,s,f,w & {\bf 16}* &         c,p,f,w & {\bf 16}* &       c,p,s,f & {\bf 16}* \\
\hline
  k,f,w,r & {\bf 11} &         k,s,f,w & {\bf 10} &       k,p,f,w & {\bf 10} \\
\hline
  k,p,s,f & {\bf 13} &         k,c,w,r & 72* &      k,c,f,r & {\bf 84}* \\
\hline
  k,c,f,w & 66* &        k,c,s,w & 1114 &     k,c,s,f & 1450 \\
\hline
  k,c,p,w & 1114 &       k,c,p,f & 1450 &     p,s,f,w,r & {\bf 10} \\
\hline
  c,p,s,f,r & {\bf 30}* &        c,p,s,f,w & {\bf 16}* &      k,p,s,f,r & {\bf 25} \\
\hline
  k,p,s,f,w & {\bf 14} &        k,c,f,w,r & 120* &     k,c,s,f,w & 1474 \\
\hline
  k,c,p,f,w & 1474 &     k,c,p,s,f & 2818 &    c,p,s,f,w,r & {\bf 30}*  \\
\hline
  k,p,s,f,w,r & {\bf 27} &       k,c,p,s,f,r & 5628 &   k,c,p,s,f,w & 2842 \\
\hline
  k,c,p,s,f,w,r & 5676  & &&& \\
\hline
\end{tabular}
\end{center}
\caption{Upper bounds on the size of the orbit}
\label{table2}
\end{table}

\section{Further work}

We plan to continue to refine our estimates in Table~\ref{table2}, and
pursue the status of other sets of operations.  For example, if
$t$ is the exponentiation operation defined in (\ref{expo}), then,
using the identities $kt = tk = k$, and the inclusion $t \subseteq k$,
we get the additional Kuratowski-style identities
$kctckck \equiv kck$, 
$kckctck \equiv kck$,
$kctctck \equiv kck$,
$tctctck \equiv tck$, and
$kctctct \equiv kct$.
This allows us to prove that  ${\cal O}_{\lbrace k,c,t \rbrace} (L)$ is
finite and of cardinality at most $126$.

\section{Acknowledgments}

We thank John Brzozowski for his comments.

\end{document}